\documentclass[11pt]{article}
\usepackage{fullpage}
\usepackage{subcaption}
\usepackage[T1]{fontenc}
\usepackage{lmodern}
\usepackage{amsmath}
\usepackage{amssymb}
\usepackage{amsthm}
\usepackage{color,soul}
\usepackage{xcolor}
\usepackage{graphicx}
\usepackage{tcolorbox}
\usepackage[normalem]{ulem}
\usepackage{float}
\usepackage{thmtools}
\usepackage{hyperref}
\usepackage{cleveref}
\usepackage{microtype}
\usepackage{caption}
\usepackage{bbm}
\hypersetup{colorlinks=true,citecolor=blue, linkcolor=blue, urlcolor=blue}
\usepackage[linesnumbered,boxed,ruled,vlined]{algorithm2e}
\usepackage{bm}
\usepackage[numbers]{natbib}
\usepackage{enumerate} 
\usepackage{enumitem}
\usepackage{tabularx}
\usepackage{array}
\usepackage[thinlines]{easytable}
\usepackage{tikz}

\usetikzlibrary{decorations.pathmorphing, arrows.meta}
\newcolumntype{L}[1]{>{\raggedright\arraybackslash}p{#1}}
\newcolumntype{C}[1]{>{\centering\arraybackslash}m{#1}}
\newcolumntype{R}[1]{>{\raggedleft\arraybackslash}p{#1}}

\usepackage{makecell}
\usepackage{footnote}
\makesavenoteenv{tabular}

\newcommand{\DTV}[2]{d_{\mathrm{TV}}\left({#1},{#2}\right)}

\renewcommand{\epsilon}{\varepsilon}

\newtheorem{theorem}{Theorem}[section]

\newtheorem*{claim*}{Claim}

\newtheorem{lemma}[theorem]{Lemma}

\newtheorem{corollary}[theorem]{Corollary}
\theoremstyle{definition}
\newtheorem{definition}[theorem]{Definition}
\newtheorem{problem}[theorem]{Problem}

\newtheorem*{remark*}{Remark}
\newtheorem{assumption}[theorem]{Assumption}

\newcommand{\abs}[1]{\left\vert#1\right\vert}

 \newcommand{\inner}[2]{\langle #1,#2\rangle}

\newcommand{\defeq}{\triangleq}

\def\*#1{\mathbf{#1}} % Use \*A for \mathbf{A}
\def\+#1{\mathcal{#1}} % Use \+A for \mathcal{A}
\def\-#1{\mathrm{#1}} % Use \-A for \mathrm{A}
\def\^#1{\mathbb{#1}} % Use \^A for \mathbb{A}

\usepackage{todonotes}
\usepackage{xifthen}

\renewcommand{\Pr}[2][]{ \ifthenelse{\isempty{#1}}
  {\mathbf{Pr}\left[#2\right]} {\mathbf{Pr}_{#1}\left[#2\right]} } % Optional argument gives a subscripted probability.
\newcommand{\E}[2][]{ \ifthenelse{\isempty{#1}}
  {\mathbf{\mathbf{E}}\left[#2\right]}
  {\mathbf{\mathbf{E}}_{#1}\left[#2\right]} }
  \newcommand{\Var}[2][]{ \ifthenelse{\isempty{#1}}
  {\mathbf{\mathbf{Var}}\left[#2\right]}
  {\mathbf{\mathbf{Var}}_{#1}\left[#2\right]} }

\crefname{theorem}{Theorem}{Theorems}
\crefname{observation}{Observation}{Observations}
\crefname{claim}{Claim}{Claims}
\crefname{condition}{Condition}{Conditions}
\crefname{algorithm}{Algorithm}{Algorithms}
\crefname{property}{Property}{Properties}
\crefname{example}{Example}{Examples}
\crefname{fact}{Fact}{Facts}
\crefname{lemma}{Lemma}{Lemmas}
\crefname{corollary}{Corollary}{Corollaries}
\crefname{definition}{Definition}{Definitions}
\crefname{remark}{Remark}{Remarks}
\crefname{proposition}{Proposition}{Propositions}
\crefname{equation}{equation}{equations}
\crefname{enumi}{}{}
\crefname{enumii}{}{}
\crefname{enumiii}{}{}
\crefname{enumiv}{}{}
\creflabelformat{enumi}{#2#1#3}
\creflabelformat{enumii}{#2#1#3}
\creflabelformat{enumiii}{#2#1#3}
\creflabelformat{enumiv}{#2#1#3}

\newboolean{DoubleBlind}
\setboolean{DoubleBlind}{false}

\title{On Deterministically Computing Total Variation Distance\\via Zonotope Compression}

\author{\ifthenelse{\boolean{DoubleBlind}}{Author(s)}
{Yucheng Fu\footnote{School of Computing and Data Science, The University of Hong Kong. \\\makebox[1.78em][l]{}Email: \texttt{fyc0130@connect.hku.hk}.}
}}

\date{}

\begin{document}

\maketitle
\begin{abstract}
  We study deterministic relative approximation of the total variation distance between high-dimensional distributions given by succinct descriptions. We develop an abstract deterministic approximation framework based on representing the total variation distance as a support function of a low-dimensional zonotope.
  As applications, we obtain FPTASs for several models. Given two mixtures of product distributions over $[q]^n$ with a total of $K$ component distributions, our algorithm approximates their TV-distance within a factor of $1+\varepsilon$ in time $\widetilde O_K(nq(n/\varepsilon)^{2K})$. We also give an FPTAS for mixtures of $n$-step Markov chains over $[q]^n$ with a total of $K$ component distributions, with running time $\widetilde O_K(nq^2(n/\varepsilon)^{2K})$. Finally, for two latent-tree Ising models with the same underlying tree topology, we give an FPTAS for the TV-distance between their leaf marginals in time $O(|V|^{13}\varepsilon^{-12})$.
\end{abstract}

\section{Introduction}
Let $\^P$ and $\^Q$ be two discrete distributions over the same space $\Omega$. The \emph{total variation distance (TV-distance)} between $\^P$ and $\^Q$ is defined as
\begin{align*}
    \DTV{\^P}{\^Q} = \frac{1}{2} \sum_{\omega \in \Omega} \left| \^P(\omega) - \^Q(\omega) \right|  =  \sum_{\omega \in \Omega}\max\{0, \^P(\omega) - \^Q(\omega)\}.
\end{align*}

The TV-distance is one of the most fundamental metrics for measuring the difference between distributions, as it characterizes the optimal distinguishability between $\^P$ and $\^Q$. Computing the TV-distance is therefore a basic problem arising in probability theory, statistics, and learning theory.

We study the following problem in this paper:

\begin{problem}[Relative approximation of TV-distance]\label{problem:tv-approx}
  Approximate the TV-distance between two distributions over the same finite sample space $\Omega$.
  \begin{itemize}
      \item \textbf{Input}: An error bound $\varepsilon \in (0,1)$ and descriptions of two distributions $\^P,\^Q$ over $\Omega$.
      \item \textbf{Output}: A number $\hat{d}$ such that $\hat{d} \leq \DTV{\^P}{\^Q} \leq (1+\varepsilon)\hat{d}$.
  \end{itemize}
\end{problem}

The difficulty is that in high-dimensional settings the sample space $\Omega$ is usually exponentially large, so the defining sum cannot be evaluated by brute force even when the distributions themselves have \emph{succinct descriptions}.

A line of research has investigated the problem of \emph{approximating} TV-distance between high-dimensional distributions, especially when the two distributions admit \emph{succinct representations}. Early works~\cite{sahai2003complete, CanonneR14,ChenK14,Kiefer18,BGMV20} studied the \emph{additive-error} approximation and related complexity questions. More recently, \cite{BGMMPV23} showed that even for two product distributions, the \emph{exact computation} of TV-distance is $\#\mathsf{P}$-hard, and initiated the study of efficient \emph{relative-error} approximation. Subsequently, relative approximation algorithms have been developed in both \emph{randomized} and \emph{deterministic} directions. Randomized approximation schemes are known for product distributions~\cite{FGJW23} and their mixtures~\cite{FengFYZ26,BCFM26}, Bayesian networks~\cite{BGMMPV24ICML}, and undirected graphical models~\cite{feng2025approximating}. Far fewer deterministic relative-approximation algorithms are known. They cover restricted product distributions~\cite{BGMMPV23}, general product distributions and Markov chains~\cite{FengLL24}, and multivariate Gaussians via a reduction to product-structured Gaussian distributions~\cite{BFS25}.

The deterministic algorithms of~\cite{FengLL24} are based on likelihood-ratio sparsification. As they noted, this technique relies on a strong conditional-independence property of the distribution, and it is natural to ask whether this restriction can be relaxed. Mixture models are a basic test case for this question. Although each component may have simple structure, the hidden component index induces a \emph{non-local dependency} that couples all variables simultaneously, and the probability mass becomes a linear combination of component probabilities.

\paragraph{Computational model}
Throughout the paper, running times are measured in an extended \emph{real-arithmetic model}, where comparison, addition, subtraction, multiplication, division, square root, and exponentiation take unit time, and the exact symmetric inverse square root of an $r\times r$ positive definite matrix is available as a primitive of cost $O(r^3)$. Under this model, all running-time bounds below count arithmetic operations. We do not analyze the bit lengths of intermediate values or the bit complexity of these operations. The notation $\widetilde O$ suppresses logarithmic factors in $n$, $q$, and $1/\varepsilon$.

In this paper, we develop an abstract \emph{deterministic framework} for this direction in \Cref{sec:abstract-framework}. Our main contribution is deterministic relative approximation for latent-variable models, including mixtures of product distributions (\Cref{sec:mix-of-prod}), mixtures of Markov chains (\Cref{sec:mix-of-mc}), and same-topology latent-tree Ising models (\Cref{sec:latent-tree}). These models have \emph{non-local dependencies} induced by hidden variables. We realize TV-distance as the support function of a low-dimensional represented zonotope whose generators encode local or component probability vectors, and recursively maintain a compressed representation that preserves all support functions up to relative error.

Our technique builds on classical geometric results. Approximation of zonoids by zonotopes with a small number of generators was studied by Bourgain, Lindenstrauss, and Milman \cite{BourgainLM89}, with further developments including \cite{Matousek96Zonotopes,ReisRothvoss26}. Aggregation-based compression and zonotope order-reduction methods have also been studied in convex geometry and control \cite{CampiHW94,YangO22,LutzowKA25}. Our normalization uses $\ell_1$ Lewis weights \cite{Lewis78} and the algorithmic treatment of Cohen and Peng \cite{CohenP15}. We combine these geometric ingredients with the recursive descriptions of the input distributions to obtain deterministic TV-distance approximation schemes.

Our first application is to \emph{mixtures of product distributions}. Mixtures of product distributions are a natural extension of product distributions. Their identity-testing problem has been studied in~\cite{0001GMMPV25}, and FPRASs for \Cref{problem:tv-approx} were given in~\cite{FengFYZ26,BCFM26}. Let $[q]=\{1,\ldots,q\}$ and let $\Omega=[q]^n$. Fix constants $k_1,k_2\geq1$. For each $s\in[k_1]$, let $\^P^{(s)}=\bigotimes_{i=1}^n\^P_i^{(s)}$ be a product distribution over $\Omega$, where $\^P_i^{(s)}$ denotes the marginal on the $i$-th coordinate. Let $\alpha^{(1)},\ldots,\alpha^{(k_1)}$ be mixing weights satisfying $\sum_{s=1}^{k_1}\alpha^{(s)}=1$. Thus the mixture $\^P$ is specified by the mixing weights and the coordinate marginals of all component product distributions, and is written as
$\^P=\sum_{s=1}^{k_1}\alpha^{(s)}\^P^{(s)}$.
More explicitly, for every $x\in\Omega$,
$$\^P(x)=\sum_{s=1}^{k_1}\alpha^{(s)}\^P^{(s)}(x)=\sum_{s=1}^{k_1}\alpha^{(s)}\prod_{i=1}^n\^P_i^{(s)}(x_i).$$
The distribution $\^Q$ is given analogously, with $k_2$ component product distributions $\^Q^{(1)},\ldots,\^Q^{(k_2)}$ and mixing weights $\beta^{(1)},\ldots,\beta^{(k_2)}$ satisfying $\sum_{t=1}^{k_2}\beta^{(t)}=1$. Mixtures of product distributions are a well-studied class of \emph{latent-variable models} in machine learning~\cite{FeldmanOS08,JainO14,ChenM19,GordonMRS21,GordonJMRS24}. To draw a random sample $X\sim\^P$, one first samples a \emph{hidden} component index $S\in[k_1]$ according to the mixing weights, and then samples $X\sim\^P^{(S)}$ from the selected product distribution. 
The total input description consists of the mixing weights and all one-coordinate marginals, and has size $O((k_1+k_2)nq)$ for the two mixtures. Our first result gives a deterministic relative-approximation algorithm for this model.

\begin{theorem}[TV-distance between mixtures of product distributions]\label{thm:mix-of-prod}
  Let $K=k_1+k_2$ be a constant. There exists an FPTAS that solves \Cref{problem:tv-approx} for mixtures of product distributions: given as input $\varepsilon \in (0,1)$, mixing weights $\alpha^{(1)},\ldots,\alpha^{(k_1)}$ and $\beta^{(1)},\ldots,\beta^{(k_2)}$, and one-coordinate marginals $\^P_i^{(s)}$ for all $s\in[k_1], i\in[n]$ and $\^Q_i^{(t)}$ for all $t\in[k_2], i\in[n]$, specifying two mixtures $\^P,\^Q$ over $[q]^n$, the algorithm outputs a number $\hat{d}$ such that $\hat{d} \leq \DTV{\^P}{\^Q} \leq (1+\varepsilon)\hat{d}$ in time $\widetilde O_K(nq(n/\varepsilon)^{2K})$.
\end{theorem}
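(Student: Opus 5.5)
We write the TV-distance as a support function of a $K$-dimensional zonotope and then compute that support function approximately, compressing the zonotope coordinate by coordinate.

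\textbf{Zonotope representation.} For $x\in[q]^n$ let
\[
v(x)=\bigl(\alpha^{(1)}\^P^{(1)}(x),\ldots,\alpha^{(k_1)}\^P^{(k_1)}(x),\beta^{(1)}\^Q^{(1)}(x),\ldots,\beta^{(k_2)}\^Q^{(k_2)}(x)\bigr)\in\mathbb{R}_{\ge0}^K,
\]
and let $c=(\mathbf{1}_{k_1},-\mathbf{1}_{k_2})$. Then $\^P(x)-\^Q(x)=\langle c,v(x)\rangle$, so
\[
\DTV{\^P}{\^Q}=\sum_x\max\{0,\langle c,v(x)\rangle\}=h_{Z}(c),
\]
where $Z=\sum_x[0,v(x)]$ is a zonotope in $\mathbb{R}^K$ and $h_Z$ denotes its support function.

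\textbf{Product structure.} Coordinatewise, $v(x)=v_1(x_1)\odot\cdots\odot v_n(x_n)$, where $v_i(a)$ stacks the values $\^P_i^{(s)}(a)$ and $\^Q_i^{(t)}(a)$, and the mixing weights are absorbed into $v_1$. Define $Z_i$ as the zonotope whose generators are the prefix vectors $v_1(x_1)\odot\cdots\odot v_i(x_i)$. Then $Z_{i+1}$ is obtained from $Z_i$ by replacing each generator $g$ with the $q$ generators $g\odot v_{i+1}(a)$, $a\in[q]$. The support function of a sum of segments is additive over generators, and Hadamard scaling by a nonnegative vector is a linear map. Consequently, if $\tilde Z_i$ satisfies $h_{\tilde Z_i}(u)\in[1,1+\delta]\,h_{Z_i}(u)$ for all $u$, then applying the same expansion to $\tilde Z_i$ preserves this relative guarantee for every direction. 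The reason is that $h_{Z_{i+1}}(u)=\sum_a h_{Z_i}(u\odot v_{i+1}(a))$, and each summand keeps its relative error.

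\textbf{Compression lemma.} I would invoke the abstract framework of \Cref{sec:abstract-framework}. Any zonotope in $\mathbb{R}^K$ with $m$ generators can be replaced deterministically by one with $N=O_K((1/\delta)^{2K})$-ish generators, up to logarithmic factors, whose support function is within factor $1\pm\delta$ in every direction. The construction normalizes by $\ell_1$ Lewis weights, computed via Cohen--Peng in the assumed arithmetic model, so the zonotope becomes roughly isotropic. It then groups generators by direction on a $\delta$-net of the sphere and aggregates each group into a single summed generator. Since $h(u)=\sum_g\max(0,\langle u,g\rangle)$, merging generators whose directions are close loses only a small relative amount, provided normalization guarantees that $h(u)$ is comparable to the total mass in each direction.

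Proving this lemma is the main obstacle. Relative error must hold even in directions where $h$ is tiny, and positive homogeneity only helps after the Lewis-weight normalization. I expect to use that the Lewis-weight ellipsoid sandwiches the zonotope up to a factor $\mathrm{poly}(K)$, and to use angular bucketing with angle $\approx\delta/\mathrm{poly}(K)$.

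\textbf{Algorithm and accounting.} I would then run $n$ rounds with $\delta=\varepsilon/(2n)$: expand by $q$, then compress back to $N$ generators. The errors compound to $(1+\delta)^n\le1+\varepsilon$, and applying a final rescaling gives the one-sided guarantee. The final answer $\hat d=h_{\tilde Z_n}(c)$ is computed by a direct sum. With $N=\widetilde O_K((n/\varepsilon)^{2K})$, each round costs $\widetilde O_K(qN)$, which gives the claimed total $\widetilde O_K(nq(n/\varepsilon)^{2K})$.
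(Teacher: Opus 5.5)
The gap is in your compression lemma: it is false for the zonotope you apply it to, so the invariant you propagate cannot be maintained.

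You work with the non-centred zonotope $Z=\sum_x[0,v(x)]$, whose support function is $h(u)=\sum_g\max(0,\langle u,g\rangle)$. You then require $h_{\tilde Z}(u)$ to be within a factor $1\pm\delta$ of $h_Z(u)$ for every $u$.

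\begin{itemize}
\item \emph{Normalization cannot help.} All $v(x)$ lie in the nonnegative orthant, so the origin is a vertex of $Z$. Hence $h_Z$ vanishes on a full-dimensional cone of directions and is arbitrarily small relative to $\sum_g\|g\|$ just outside it. No invertible linear map changes this, so Lewis-weight normalization cannot make $h(u)$ ``comparable to the total mass in each direction''. The Lewis ellipsoid sandwiches $u\mapsto\sum_g|\langle u,g\rangle|$, which is the support function of the symmetric zonotope $\sum_g[-g,g]$, not $h$.
\item \emph{Merging fails concretely.} In $\mathbb R^2$ take $t=\epsilon'(1,0)$, $G=(1,\theta)$, $H=(0,1)$ (already well rounded) and $u=(\theta/2,-1)$. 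This $u$ has exactly the sign pattern $c\odot w$, $w>0$, that your recursion queries. Then $h_Z(u)=\epsilon'\theta/2>0$. But once $t$ and $G$ share a bucket, $\langle u,t+G\rangle<0$, so the compressed value is $0$.
\item \emph{No compression with boundedly many generators works.} Take $Z=\sum_{j=1}^m[0,2^j(\cos j\theta,\sin j\theta)]$ with $m\theta$ small, and directions $u_\phi=(\sin\phi,-\cos\phi)$. For $a$ not too small, the values of $h_Z(u_\phi)$ at $\phi=a\theta,(a+1)\theta,(a+2)\theta$ are roughly proportional to $1,2,4$. For any zonotope with generators $\tilde g_k$, between consecutive zeros of $\phi\mapsto\langle u_\phi,\tilde g_k\rangle$ its support function is a single positive, hence concave, sinusoid. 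So a $(1\pm\frac1{20})$-approximation in all directions needs $\Omega(m)$ generators.
\end{itemize}

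The missing idea is to symmetrize, as the paper does. Because $\sum_x(\^P(x)-\^Q(x))=0$, you have $\DTV{\^P}{\^Q}=\frac12\sum_x|\langle c,v(x)\rangle|$. This is half the support function at $c$ of the centrally symmetric zonotope $\sum_x[-v(x),v(x)]$. For symmetric zonotopes the Lewis rounding gives $1\le\sigma_{A\+Z}(u)\le2\sqrt r$ for unit $u$. Hence $\sum_i\|y_i\|\le2\sqrt r/c_r$, and the additive merging loss $2\delta\sum_i\|y_i\|$ becomes relative. Your Hadamard recursion then goes through verbatim with $|\cdot|$ in place of $\max(0,\cdot)$.

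Your algorithm's output is in fact correct, but only through the symmetric analysis. Write $h_Z(u)=\frac12\sum_g|\langle u,g\rangle|+\frac12\langle u,\sum_g g\rangle$. Merging and Hadamard scaling preserve $\sum_g g$ exactly, and $\langle c,\sum_x v(x)\rangle=0$. So the invariant you can actually carry is the symmetric one, not the one you state.

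Separately, your generator count $N=O_K(\delta^{-2K})$ looks reverse-engineered. Bucketing on a $\delta$-net yields $O_K(\delta^{-K})$ generators. In the paper, the second factor $(n/\varepsilon)^K$ comes from the brute-force nearest-net-point search in each compression. Your per-round cost $\widetilde O_K(qN)$ instead tacitly assumes a near-linear-time bucketing that you do not describe.
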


\Cref{thm:mix-of-prod} provides a \emph{deterministic} relative-approximation algorithm for the TV-distance between mixtures of product distributions. 
The deterministic version of the non-mixture product case has been studied in \cite{FengLL24}, which gave a deterministic algorithm with running time
$O(\frac{qn^2}{\varepsilon}\log q\log\frac{n}{\varepsilon\DTV{\^P}{\^Q}})$.
For mixtures of product distributions, the identity testing problem, i.e.\ deciding whether $\DTV{\^P}{\^Q}=0$, was studied in \cite{0001GMMPV25}, which gave a deterministic algorithm with running time $\mathrm{poly}(n,q,K)$. For relative approximation, \cite[Theorem 1.2]{FengFYZ26} gives a randomized algorithm with running time $O(q^2(nq)^{2K-2}/\varepsilon^2)$ and success probability at least $99\%$, where $K=k_1+k_2$. Our algorithm is deterministic and runs in time $\widetilde O_K(nq(n/\varepsilon)^{2K})$. The dependence on $\varepsilon$ increases from $\varepsilon^{-2}$ to $\varepsilon^{-2K}$, up to logarithmic factors. The contribution is the deterministic relative guarantee for mixtures; both running-time bounds are polynomial in $n$, $q$, and $1/\varepsilon$ for fixed $K$.

We apply the abstract framework to the vector of component probabilities. The product structure provides a coordinatewise recursion, while the mixture weights enter only in the final linear combination. The dependence on $K=k_1+k_2$ appears in the exponent of the running time, so the theorem gives an FPTAS when $K=O(1)$.

Our second application is to \emph{mixtures of Markov chains}. Markov chains extend product distributions by allowing limited dependence between adjacent coordinates. Mixtures of Markov chains are therefore a natural extension of mixtures of product distributions. Let again $\Omega=[q]^n$. For each $s\in[k_1]$, let $\^P^{(s)}$ be an $n$-step Markov-chain distribution over $\Omega$, specified by an initial distribution $\^P_1^{(s)}$ over $[q]$ and transition matrices $\^P_{2|1}^{(s)},\ldots,\^P_{n|n-1}^{(s)}$. Let $\alpha^{(1)},\ldots,\alpha^{(k_1)}$ be mixing weights satisfying $\sum_{s=1}^{k_1}\alpha^{(s)}=1$. Thus the mixture $\^P$ is specified by the mixing weights and, for every component, its initial distribution and transition matrices; for every $x=(x_1,\ldots,x_n)\in\Omega$, it is given by
$$\^P(x)=\sum_{s=1}^{k_1}\alpha^{(s)}\^P^{(s)}(x)=\sum_{s=1}^{k_1}\alpha^{(s)}\^P_1^{(s)}(x_1)\prod_{i=2}^n\^P_{i|i-1}^{(s)}(x_i|x_{i-1}).$$
The distribution $\^Q$ is given analogously, with $k_2$ Markov-chain components $\^Q^{(1)},\ldots,\^Q^{(k_2)}$ and mixing weights $\beta^{(1)},\ldots,\beta^{(k_2)}$ satisfying $\sum_{t=1}^{k_2}\beta^{(t)}=1$. A mixture of Markov chains combines local temporal dependence inside each component with a global hidden choice shared by the entire trajectory, and has been studied in learning~\cite{GuptaKV16,SpaehT23,KausikTT23}.  The total input description consists of the mixing weights, initial distributions, and transition matrices, and has size $O((k_1+k_2)nq^2)$ for the two mixtures. Our second result gives a deterministic approximation scheme for this mixture-of-Markov-chains model.

\begin{theorem}[TV-distance between mixtures of Markov chains]\label{thm:mix-of-mc}
  Let $K = k_1 + k_2$ be a constant. There exists an FPTAS that solves \Cref{problem:tv-approx} for mixtures of $n$-step Markov-chain trajectories: given as input $\varepsilon \in (0,1)$, mixing weights $\alpha^{(1)},\ldots,\alpha^{(k_1)}$ and $\beta^{(1)},\ldots,\beta^{(k_2)}$, initial distributions $\^P_1^{(s)},\^Q_1^{(t)}$ and transition matrices $\^P_{i|i-1}^{(s)},\^Q_{i|i-1}^{(t)}$ for all $s\in[k_1]$, $t\in[k_2]$, and $i=2,\ldots,n$, specifying two mixtures of Markov chains $\^P,\^Q$ over $[q]^n$, the algorithm outputs a number $\hat{d}$ such that $\hat{d} \leq \DTV{\^P}{\^Q} \leq (1+\varepsilon)\hat{d}$ in time $\widetilde O_K(nq^2(n/\varepsilon)^{2K})$.
\end{theorem}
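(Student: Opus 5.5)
We adapt the zonotope-compression framework (\Cref{sec:abstract-framework}) exactly as for \Cref{thm:mix-of-prod}, modifying only the recursion to carry the current state of the chain. For $x\in[q]^n$ let $v(x)\in\mathbb{R}^K$ be the vector of component probabilities $(\^P^{(1)}(x),\ldots,\^P^{(k_1)}(x),\^Q^{(1)}(x),\ldots,\^Q^{(k_2)}(x))$. Set $w=(\alpha^{(1)},\ldots,\alpha^{(k_1)},-\beta^{(1)},\ldots,-\beta^{(k_2)})$. Then
\[
\DTV{\^P}{\^Q}=\sum_x\max\{0,\inner{w}{v(x)}\}=h_Z(w),
\]
where $Z=\sum_x[0,v(x)]$ is the zonotope whose generators are the $v(x)$ and $h_Z$ is its support function.

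The recursion is indexed by prefixes and split by the last symbol. For $i\in[n]$ and $a\in[q]$, let $Z_{i,a}=\sum_{x_{1..i}:\,x_i=a}[0,v(x_{1..i})]$, where $v(x_{1..i})$ is the vector of prefix probabilities. The Markov property gives a linear update. Let $D_{i,b|a}=\mathrm{diag}$ of the $K$ transition probabilities $\^P^{(s)}_{i|i-1}(b|a)$ and $\^Q^{(t)}_{i|i-1}(b|a)$. Then $v(x_{1..i})=D_{i,x_i|x_{i-1}}v(x_{1..i-1})$, and hence
\[
Z_{i,b}=\sum_{a\in[q]}D_{i,b|a}Z_{i-1,a}.
\]
Here a linear image of a zonotope is the zonotope of the mapped generators, and the sum is a Minkowski sum (union of generator multisets). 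Since the $D$'s are diagonal with nonnegative entries, this fits the operations the abstract framework supports, in the same way the single diagonal map per symbol did in the product case. The initial sets are $Z_{1,a}=[0,v(a)]$, and the final zonotope is $Z=\sum_a Z_{n,a}$.

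The algorithm maintains, for each $(i,b)$, a compressed represented zonotope $\tilde Z_{i,b}$ with $m=\widetilde O_K((n/\varepsilon)^{K})$ generators. It must satisfy the relative guarantee $(1+\delta)^{-1}h_{Z_{i,b}}(u)\le h_{\tilde Z_{i,b}}(u)\le h_{Z_{i,b}}(u)$ for all directions $u$, with $\delta=\Theta(\varepsilon/n)$. At step $i$, for each $b$ we form $\sum_a D_{i,b|a}\tilde Z_{i-1,a}$, which has $qm$ generators. We then apply the framework's compression routine (Lewis-weight normalization followed by aggregation) to bring it back to $m$ generators. Both properties are preserved by the operations. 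Support functions of nonnegative combinations are monotone, since $h_{DZ}(u)=h_Z(Du)$ and $h$ is additive under Minkowski sums. So the relative error accumulates multiplicatively to $(1+\delta)^n\le1+\varepsilon$. The lower-bound direction $\hat d\le d_{\mathrm{TV}}$ follows if compression only produces inner approximations, or after rescaling by $(1+\delta)^{-n}$. The output is $\hat d=h_{\tilde Z}(w)$, computed directly from the generators.

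For the running time, each of the $n$ steps builds $q$ zonotopes, each with $qm$ input generators. If compression costs time nearly linear in the number of input generators times $\mathrm{poly}(K)$, plus a term depending only on $m$ and $K$, the total is $\widetilde O_K(nq^2 m\cdot(n/\varepsilon)^{K})=\widetilde O_K(nq^2(n/\varepsilon)^{2K})$, matching the claim. This presumes the compression cost bound from the product case carries a factor $m$ from aggregating into a grid of $(n/\varepsilon)^{K}$ cells. The main obstacle I expect is exactly this accounting. The product-case compression was applied to one zonotope per step, whereas here there are $q$ zonotopes per step, each fed by $q$ predecessors. I must check that the compression guarantee depends only on the dimension $K$ and $\delta$, not on the number of input generators or on the conditioning of the maps $D_{i,b|a}$. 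Zero transition probabilities could make the images degenerate or lower-dimensional, so the Lewis-weight normalization must handle rank-deficient inputs, for example by restricting to the span. If the framework's compression lemma is stated for general represented zonotopes in $\mathbb{R}^K$ with relative support-function error, this carries over verbatim and the theorem follows.
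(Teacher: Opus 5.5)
Your recursion is the mirror image of the paper's. The paper builds suffix multisets $\+H_i^u$ indexed by the previous state $x_{i-1}=u$ and runs the diagonal-sum gates backward; you build prefix zonotopes $Z_{i,b}$ indexed by $x_i=b$ and run them forward. Both give $O(nq)$ gates with $q$ inputs of size $O_K((n/\varepsilon)^K)$, error weight $n$, and the same running-time count.

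\textbf{The gap is in the object you compress.} You use the one-sided zonotope $Z=\sum_x[0,v(x)]$ and assume compression gives $(1+\delta)^{-1}h_Z(u)\le h_{\tilde Z}(u)$ for every direction $u$. \Cref{lem:compression} does not give this. It concerns centrally symmetric zonotopes $\sum_i[-g_i,g_i]$, whose support function is $\sigma(u)=\sum_i\abs{\inner{u}{g_i}}$. Its proof works because the aggregation loss $2\delta\sum_i\|y_i\|$ is small compared with $\sigma(u)\ge1$ in the rounded position.

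The one-sided support function $h(u)=\sum_i\max\{0,\inner{u}{y_i}\}=\frac12\inner{u}{\sum_iy_i}+\frac12\sigma(u)$ has no such lower bound; it vanishes on the polar of the cone spanned by the generators. Aggregation can wipe it out entirely. Suppose $y_1,y_2$ fall into the same net cell, $\inner{u}{y_1}=\tau>0$, $\inner{u}{y_2}=-2\tau$, and $\inner{u}{y}\le0$ for all other generators. Then $h(u)$ drops from $\tau$ to $0$.

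This is exactly the regime you need. Your generators are nonnegative probability vectors, while the directions that matter ($w$ pulled back through the later diagonal maps) have mixed signs. So the left inequality of the invariant you propagate is false for the routine you invoke, and the correctness argument as written does not go through.

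\textbf{The fix is to use symmetric segments, as the paper does.}
\begin{itemize}
\item By \Cref{lem:tv-support}, $2\DTV{\^P}{\^Q}=\sigma_{\+Z(\+G^{\mathrm{root}})}(w)$.
\item Your update $Z_{i,b}=\sum_aD_{i,b|a}Z_{i-1,a}$ is a diagonal-sum gate, admissible with error map $\max$ by \Cref{lem:admissible-diag-sum}.
\item The containments $\underline{\+Z}\subseteq\+Z\subseteq(1+\eta)^s\underline{\+Z}$ then propagate as in \Cref{thm:abstract-fptas}.
\end{itemize}

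Your algorithm's output is in fact already correct. Aggregation, the maps $A^{\pm1}$, and the gates all preserve the sum of generators, so the final compressed generators sum to $\mathbf 1$. Since $\inner{w}{\mathbf 1}=\sum_s\alpha^{(s)}-\sum_t\beta^{(t)}=0$, your $h_{\tilde Z}(w)$ equals $\frac12\sigma_{\+Z(\tilde{\+G})}(w)$, which is the paper's estimator. But the proof has to track the symmetric containment, not the one-sided one.

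Your rank-deficiency concern is already handled by \Cref{lem:lewis-rounding}, which works inside the span $E$ of the generators.
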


\Cref{thm:mix-of-mc} provides a relative-approximation algorithm for the TV-distance between mixtures of Markov chains. For Markov-chain distributions, \cite{FengLL24} studied the non-mixture Markov-chain case and gave a deterministic FPTAS for the TV-distance between two $n$-step Markov chains in time $O(\frac{q^2n^2}{\varepsilon}\log q\log \frac{n}{\varepsilon\DTV{\^P}{\^Q}})$. However, their setting has a single Markov chain on each side and does not include a latent component index. \Cref{thm:mix-of-mc} provides an extension of their result to mixtures of Markov chains.

Mixtures of Markov chains illustrate how the same framework handles local dependence inside each component. Unlike in the product case, the next coordinate is generated conditionally on the value of the previous one, so the recursive description has to keep track of the possible previous states. This is reflected in the additional $q$-dependence in the running time. 

By letting $k_1 = k_2 = 1$, we obtain FPTASs for the TV-distance between product distributions and between Markov chains as immediate corollaries of \Cref{thm:mix-of-prod,thm:mix-of-mc}. For both problems, the prior \emph{deterministic} relative-approximation algorithms of \cite{FengLL24} have running times that additionally depend on $\log({1}/{\DTV{\^P}{\^Q}})$. Our bounds remove this dependence. In the \emph{real-arithmetic model}, this gives strongly polynomial-time deterministic relative-approximation schemes, with arithmetic operation counts polynomial in $n$, $q$, and $1/\varepsilon$, and independent of the numerical value of $\DTV{\^P}{\^Q}$. Here strong polynomiality refers to this arithmetic operation count; it does not assert a bit-complexity bound.

\begin{corollary}[TV-distance between product distributions]
  There exists an FPTAS such that given any $\varepsilon \in (0,1)$ and two product distributions $\^P$, $\^Q$ over $[q]^n$, it outputs a real number $\hat{d}$, satisfying $\hat{d} \leq \DTV{\^P}{\^Q} \leq (1+\varepsilon)\hat{d}$, in time $\widetilde O\left(qn(n/\varepsilon)^4\right)$.
\end{corollary}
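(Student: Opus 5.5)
This corollary is the special case $k_1=k_2=1$ of \Cref{thm:mix-of-prod}, so the plan is a direct specialization. A single product distribution $\^P=\bigotimes_{i=1}^n\^P_i$ is a mixture of product distributions with one component and mixing weight $\alpha^{(1)}=1$; likewise $\^Q$ with $\beta^{(1)}=1$. The input to the corollary (the one-coordinate marginals $\^P_i,\^Q_i$ for $i\in[n]$) is therefore exactly a valid input to the algorithm of \Cref{thm:mix-of-prod}, after appending the trivial weights. This conversion costs $O(1)$ extra operations, and the input size stays $O(nq)$.

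We then run the FPTAS of \Cref{thm:mix-of-prod} with the same $\varepsilon$. Its correctness guarantee gives a number $\hat d$ with $\hat d\le \DTV{\^P}{\^Q}\le(1+\varepsilon)\hat d$. This is exactly the output required, since the mixture with one component and weight $1$ coincides with $\^P$ (resp.\ $\^Q$) pointwise on $[q]^n$.

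For the running time, substitute $K=k_1+k_2=2$ into the bound $\widetilde O_K(nq(n/\varepsilon)^{2K})$ to get $\widetilde O_2(nq(n/\varepsilon)^{4})$. Since $K=2$ is an absolute constant, the $K$-dependent factors hidden in $\widetilde O_K$ are absolute constants. The bound therefore becomes $\widetilde O(qn(n/\varepsilon)^4)$, as claimed.

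The only point requiring any care, and hence the main (minor) obstacle, is confirming that the hidden dependence on $K$ in $\widetilde O_K$ involves nothing beyond constants and logarithmic factors in $n,q,1/\varepsilon$ once $K$ is fixed. This holds by the meaning of the subscript notation in \Cref{thm:mix-of-prod}, which treats $K$ as a constant.
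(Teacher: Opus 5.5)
Your proposal is correct and follows the paper's argument: the paper obtains this corollary by setting $k_1=k_2=1$ with unit mixing weights in \Cref{thm:mix-of-prod}, so $K=2$, and the bound $\widetilde O_K(nq(n/\varepsilon)^{2K})$ becomes $\widetilde O(qn(n/\varepsilon)^4)$ once the constant $K=2$ is absorbed.
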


\begin{corollary}[TV-distance between Markov chains]
  There exists an FPTAS such that given any $\varepsilon \in (0,1)$ and two $n$-step Markov chains $\^P$, $\^Q$ over $[q]^n$, it outputs a real number $\hat{d}$, satisfying $\hat{d} \leq \DTV{\^P}{\^Q} \leq (1+\varepsilon)\hat{d}$, in time $\widetilde O\left(q^2n(n/\varepsilon)^4\right)$.
\end{corollary}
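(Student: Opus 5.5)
This corollary follows from \Cref{thm:mix-of-mc} by specialization, so the plan is a reduction followed by a parameter substitution. The substantive content sits in \Cref{thm:mix-of-mc}, which I take as given.

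\emph{Reduction.} Given two $n$-step Markov chains $\^P,\^Q$ over $[q]^n$, each with an initial distribution and transition matrices, I view each as a trivial mixture. Set $k_1=k_2=1$ and mixing weights $\alpha^{(1)}=\beta^{(1)}=1$. Take $\^P^{(1)}=\^P$, with initial distribution $\^P_1$ and transition matrices $\^P_{i|i-1}$, and define $\^Q^{(1)}=\^Q$ analogously. For every $x\in[q]^n$, the mixture formula then gives
\[
\sum_{s=1}^{1}\alpha^{(s)}\^P^{(s)}(x)=\^P_1(x_1)\prod_{i=2}^n\^P_{i|i-1}(x_i\mid x_{i-1})=\^P(x),
\]
and likewise for $\^Q$. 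So the input is a valid instance of the mixture-of-Markov-chains problem, and it specifies exactly the same pair of distributions. The TV-distance is therefore unchanged.

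\emph{Guarantee and running time.} Run the FPTAS of \Cref{thm:mix-of-mc} on this instance with the same $\varepsilon$. It returns $\hat d$ with $\hat d\le \DTV{\^P}{\^Q}\le(1+\varepsilon)\hat d$. With $K=k_1+k_2=2$, a fixed constant, the running time $\widetilde O_K(nq^2(n/\varepsilon)^{2K})$ becomes $\widetilde O(q^2 n (n/\varepsilon)^4)$. The dependence on $K$ hidden in $\widetilde O_K$ is an absolute constant here. Building the instance only copies the input of size $O(nq^2)$, which the bound absorbs.

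The only point needing care is that the FPTAS of \Cref{thm:mix-of-mc} must not assume $K\ge 3$ or require distinct components. The theorem places no such restriction, so the degenerate case $k_1=k_2=1$ is covered.
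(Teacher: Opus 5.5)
Your proposal is correct and matches the paper's argument: the paper obtains this corollary directly from \Cref{thm:mix-of-mc} by setting $k_1=k_2=1$. Then $K=2$, and the bound $\widetilde O_K(nq^2(n/\varepsilon)^{2K})$ becomes $\widetilde O(q^2n(n/\varepsilon)^4)$.
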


Our framework can also be applied to graphical models. The third application is to \emph{latent-tree Ising models}, a central family of tree-structured graphical models in which the observed variables are leaves and the internal variables are hidden. Let $T=(V,E)$ be an undirected tree with leaf set $L\subseteq V$. We consider two Ising models $(T,J^{\^P},h^{\^P})$ and $(T,J^{\^Q},h^{\^Q})$. Here $J^{\^P}$ and $J^{\^Q}$ are \emph{interaction matrices}, assigning one interaction value to each edge $\{u,v\}\in E$, and $h^{\^P},h^{\^Q}\in\^R^{V}$ are \emph{external field vectors}. The Ising model $(T,J^{\^P},h^{\^P})$ defines a Gibbs distribution $\^P$ as follows. For any $x_V\in\{\pm1\}^V$, define the \emph{weight function} $w_{\^P}$ and the \emph{partition function} $Z_{\^P}$ as
$$w_{\^P}(x_V)=\exp\left(\sum_{\{u,v\}\in E}J^{\^P}_{uv}x_ux_v+\sum_{u\in V}h^{\^P}_ux_u\right), \qquad Z_{\^P}=\sum_{x_V\in\{\pm1\}^V}w_{\^P}(x_V).$$
The full \emph{Gibbs distribution} $\^P$ induced by $(T,J^{\^P},h^{\^P})$ is given by $\^P(x_V)=w_{\^P}(x_V)/Z_{\^P}$. Let $\^P_L$ be the marginal distribution of $\^P$ over the leaves $L$, i.e.,
$$\^P_L(x_L)=\sum_{\substack{y\in\{\pm1\}^V:\\y_L=x_L}}\^P(y), \qquad x_L\in\{\pm1\}^L.$$
The distributions $\^Q$ and $\^Q_L$ are defined analogously, and the target is to approximate $\DTV{\^P_L}{\^Q_L}$. \emph{Same topology} means that the two models share the same tree $T$ and the same observed leaves, while their interaction matrices $J^{\^P}, J^{\^Q}$ and external fields $h^{\^P}, h^{\^Q}$ may be different. The input description consists of the tree and the two parameter sets $(J^{\^P},h^{\^P})$ and $(J^{\^Q},h^{\^Q})$. The hidden internal spins make each leaf probability a sum over many internal configurations, so the observed distribution has \emph{non-local dependencies} even though the underlying model is tree-structured. Such latent-tree models are widely studied in the learning literature~\cite{BreslerK20,DaganKD22,KandirosDDC23}. Relative-error randomized approximation algorithms are known for broader graphical models with parameter restrictions~\cite{feng2025approximating}; our third result gives a deterministic FPTAS for the same-topology latent-tree Ising models.

\begin{theorem}[Same-topology latent-tree Ising]\label{thm:latent-tree}
  There exists an FPTAS that solves \Cref{problem:tv-approx} for same-topology latent-tree Ising models: given as input $\varepsilon\in(0,1)$, a tree $T=(V,E)$ with leaf set $L$, edge interactions $J^{\^P},J^{\^Q}$, and external fields $h^{\^P},h^{\^Q}$, specifying two Ising models $(T,J^{\^P},h^{\^P})$ and $(T,J^{\^Q},h^{\^Q})$ whose leaf marginals are $\^P_L$ and $\^Q_L$, the algorithm outputs a number $\hat{d}$ such that $\hat{d} \leq \DTV{\^P_L}{\^Q_L} \leq (1+\varepsilon)\hat{d}$ in time $O\left(|V|(|V|/\varepsilon)^{12}\right)$.
\end{theorem}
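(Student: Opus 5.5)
We will apply the abstract zonotope-compression framework of \Cref{sec:abstract-framework} to a bottom-up dynamic program over the tree. First, root $T$ at an arbitrary internal vertex $r$. For every vertex $v$, let $L_v$ be the leaves in the subtree $T_v$. For each spin $\sigma\in\{\pm1\}$, define the vector indexed by $x_{L_v}\in\{\pm1\}^{L_v}$ of conditional subtree weights
\[
W^{\^P}_v(\sigma)(x_{L_v})=\sum_{y_{T_v}:\,y_v=\sigma,\,y_{L_v}=x_{L_v}}\exp\Bigl(\sum_{\{a,b\}\in E(T_v)}J^{\^P}_{ab}y_ay_b+\sum_{a\in T_v}h^{\^P}_ay_a\Bigr),
\]
and define $W^{\^Q}_v(\sigma)$ analogously. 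The four vectors at $v$ are the rows of a $4\times 2^{|L_v|}$ matrix $M_v$, so each leaf configuration $x_{L_v}$ gives a generator $g_v(x)\in\^R^4$ and a zonotope $Z_v=\sum_x[0,g_v(x)]$.

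At the root we have $\^P_L(x)=\frac{1}{Z_{\^P}}\sum_\sigma W^{\^P}_r(\sigma)(x)$, and similarly for $\^Q$. Hence $\DTV{\^P_L}{\^Q_L}=\sum_x\max\{0,\inner{c}{g_r(x)}\}=h_{Z_r}(c)$, where $c=(1/Z_{\^P},1/Z_{\^P},-1/Z_{\^Q},-1/Z_{\^Q})$ and $h_{Z_r}$ is the support function. The partition functions themselves are computed exactly by ordinary tree belief propagation in $O(|V|)$ time. So the TV-distance is the support function of a zonotope in dimension $r=4$, which is exactly the setting of the framework.

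The recursion has two kinds of steps.

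\emph{Edge step.} Passing from a child $u$ to its parent $v$ through edge $\{u,v\}$ multiplies each generator by a fixed $4\times4$ matrix. Concretely, $W_{u\to v}(\tau)=\sum_\sigma e^{J_{uv}\sigma\tau}W_u(\sigma)$, separately for $\^P$ and $\^Q$, and this matrix is block diagonal. Linear maps commute with support functions, since $h_{AZ}(c)=h_Z(A^\top c)$. Therefore a compressed representation that is $(1\pm\delta)$-accurate in every direction stays accurate after the map, with no loss. The external field $e^{h_v\tau}$ is likewise a diagonal linear map.

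\emph{Merge step.} At a vertex $v$ with children $u_1,u_2$ (after binarizing $v$, which is harmless), the generators of $Z_v$ are the coordinatewise products $g_{u_1}(x)\odot g_{u_2}(x')$ over pairs of configurations. The identity we rely on is that coordinatewise products of generator sets correspond to the tensor structure used in the framework's product lemma. Specifically, if $\{g\}$ and $\{g'\}$ approximate their zonotopes within factor $1+\delta$ in all directions, then the set of products $\{g\odot g'\}$ approximates the product zonotope within factor $(1+\delta)^2$. This holds because, for fixed $g'$, the map $g\mapsto g\odot g'$ is linear, and symmetrically for fixed $g$. We then compress the product set back to $\mathrm{poly}(|V|/\varepsilon)$ generators using the Lewis-weight compression of the framework, with per-step error $\delta=\varepsilon/(c|V|)$.

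\emph{Error accumulation and running time.} The errors accumulate multiplicatively over at most $O(|V|)$ merges and compressions, which gives a total factor of $1+\varepsilon$ after rescaling. In dimension $4$, the framework's compression yields $m=\widetilde O((|V|/\varepsilon)^{\Theta(1)})$ generators. A merge produces $m^2$ products that are then compressed. Tracking the exponents (roughly $m\approx(|V|/\varepsilon)^{4}$ after an $\ell_1$-net argument in dimension $4$, and merge cost $m^2$ times compression overhead) should give the stated $O(|V|(|V|/\varepsilon)^{12})$. The final output is $\hat d=h_{\tilde Z_r}(c)/(1+\varepsilon)$, which is a valid lower bound.

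The main obstacle is the merge step. We must verify that relative support-function approximation is preserved under coordinatewise products, when both factors are only approximated and the product map is bilinear rather than linear. The plan is to do this in two stages: first replace the first factor with the second held exact, then replace the second. Each stage is a family of linear images of a single approximated zonotope, summed over the other factor's generators, and the relative errors add because all summands $\max\{0,\cdot\}$ are nonnegative. A secondary concern is that the compressed generators must keep nonnegative coordinates, or at least the structure the framework assumes. This should hold because the framework compresses by aggregating generators within groups, which preserves the cone spanned by the original generators.
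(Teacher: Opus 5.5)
Your plan follows the paper's proof closely. It uses the same four-dimensional interface of spin-conditioned unnormalized subtree weights and the same direction $c$, with $Z_{\mathbb P},Z_{\mathbb Q}$ computed by tree belief propagation. The merge is the same coordinatewise-product (edge-lift) step, whose stability comes from the two-stage identity $\sigma_{A\star C}(z)=\sum_g\sigma_C(D_gz)=\sum_h\sigma_A(D_{Bh}z)$. Error exponents add, so the root exponent is $|E|$, and the cost count $|V|\cdot M^2\eta^{-4}$ with $M=O((|V|/\varepsilon)^4)$ is the same. Your binarization amounts to the paper's attaching children one at a time, and treating the field as a diagonal map rather than a leaf multiset is equivalent.

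One step, however, fails as you wrote it. You work with one-sided zonotopes $\sum_x[0,g_v(x)]$, whose support function is $\sum\max\{0,\inner{z}{g}\}$, and you need the compressed set to be relatively accurate for this function in every direction. No aggregation-based compression can give that. Your generators are nonnegative, and $\lambda\underline Z\subseteq\operatorname{cone}\{\underline g_j\}$ for every $\lambda>0$. Suppose a generator spanning an extreme ray of $\operatorname{cone}\{g_i\}$ is aggregated with a non-parallel one, which the net grouping does generically. Then that ray is no longer in $\operatorname{cone}\{\underline g_j\}$, so $Z\not\subseteq\lambda\underline Z$ for every $\lambda$. Equivalently, there are directions $z$ with $h_{\underline Z}(z)=0<h_Z(z)$. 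The compression lemma's own proof also breaks here: it needs $\sigma_{AZ}(u)\ge1$ for every unit $u$, which is impossible when the origin is a vertex of the zonotope. Since your merge step evaluates the child's support function at the arbitrary directions $D_gz$, the induction collapses.

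The repair is the paper's choice. Carry the invariant $\underline Z\subseteq Z\subseteq(1+\eta)^{s}\underline Z$ for the symmetric zonotopes $\sum[-g,g]$, i.e.\ replace $\max\{0,\cdot\}$ by $\abs{\cdot}$. Your two-stage merge argument then goes through verbatim. Nothing is lost at the root, because $\sum_x\inner{c}{g_r(x)}=\sum_x(\mathbb P_L(x)-\mathbb Q_L(x))=0$ gives $\sum_x\max\{0,\inner{c}{g_r(x)}\}=\frac12\sum_x\abs{\inner{c}{g_r(x)}}$. Because the symmetric compression is an inner approximation, you can output $\frac12\sigma_{\underline Z_r}(c)$ directly rather than dividing by $1+\varepsilon$. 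Your worry about nonnegativity of the compressed generators is also moot, since the framework never uses it.
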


\Cref{thm:latent-tree} provides a \emph{deterministic} relative-approximation algorithm for the TV-distance between latent-tree Ising models. Prior relative-approximation algorithms that apply to graphical models are \emph{randomized} and require parameter restrictions such as bounded interaction strengths~\cite{feng2025approximating}; by contrast, our algorithm is deterministic and imposes no restriction on the interactions or external fields, at the price of the same-topology assumption. The same-topology assumption is used to compare the two models through a common recursive tree structure. It also leaves open the more structural question of comparing latent-tree models with different underlying trees, which is closely related to learning and testing latent-tree Ising models~\cite{KandirosDDC23}. 

\paragraph{Subsequent work}
In subsequent work, Bhattacharyya, Cormode, Fu, and Meel~\cite[Theorem 1.1]{BCFM26} give an FPRAS for the TV-distance between mixtures of product distributions, using $\mathrm{poly}(n,q,K,1/\varepsilon,\log(1/\delta))$ arithmetic operations to obtain a $(1\pm\varepsilon)$ relative approximation with probability at least $1-\delta$. Their algorithm has polynomial dependence on the number of components $K$, while our algorithm provides a deterministic relative approximation for every fixed $K$. Both approaches preserve sums of absolute linear projections throughout the recursion. Their method uses randomized Lewis-weight sampling, while ours deterministically aggregates generators with nearby directions. Their work also extends to smooth, structured-decomposable probabilistic circuits that share a common v-tree.

\paragraph{Open problems}
Several natural directions remain to be explored. First, can the polynomial-size $\ell_1$ coresets used in~\cite{BCFM26} be constructed deterministically in polynomial time? Replacing our compression step with such a construction would yield deterministic TV-distance approximation schemes for mixtures of product distributions and Markov chains with running time $\mathrm{poly}(n,q,K,1/\varepsilon)$.

Second, it would be interesting to extend the framework to broader models. On the discrete side, hidden Markov models are a natural target, although TV-distance computation for general hidden Markov models is already known to have strong hardness and decidability barriers~\cite{Kiefer18}. Other possible targets include mixtures of Markov decision processes and related latent-context models, which have been studied from the learning perspective~\cite{KausikTT23,KwonECM23}. It would also be interesting to remove the same-topology assumption for latent-tree Ising models as stated after \Cref{thm:latent-tree}. Moreover, our results in this paper are for discrete domains, and another direction is to understand whether the represented-zonotope viewpoint can be adapted to continuous distributions. Natural candidates include multivariate Gaussians~\cite{BFS25}, mixtures of Gaussians~\cite{LiS17,AshtianiBHLPM18}, and Gaussian latent-tree models~\cite{DaganKD22}.

\section{Technical Overview}
We explain the main ideas through mixtures of product distributions. TV-distance becomes the support function of a low-dimensional zonotope, and the product structure gives a recursion for this zonotope. A geometric compression step keeps its representation small while preserving the support function up to a relative factor.

\subsection{From TV-distance to Represented Zonotopes}
Let $\^P=\sum_{s=1}^{k_1}\alpha^{(s)}\^P^{(s)}$ and $\^Q=\sum_{t=1}^{k_2}\beta^{(t)}\^Q^{(t)}$ be two mixtures over $\Omega=[q]^n$, and set $K=k_1+k_2$. For each outcome $x$, collect its probabilities under all component distributions into a vector
$$
  g(x)=\left(\^P^{(1)}(x),\ldots,\^P^{(k_1)}(x),
  \^Q^{(1)}(x),\ldots,\^Q^{(k_2)}(x)\right),
$$
and put the signed mixture weights in
$$
  \gamma=\left(\alpha^{(1)},\ldots,\alpha^{(k_1)},
  -\beta^{(1)},\ldots,-\beta^{(k_2)}\right).
$$
Then $\inner{\gamma}{g(x)}=\^P(x)-\^Q(x)$. We call such an encoding a \emph{TV-realization}. The root multiset $\+G^{\mathrm{root}}=\{g(x):x\in\Omega\}$ lies in dimension $K$, although it contains $q^n$ vectors.

The \emph{represented zonotope} generated by a finite multiset $\+G$ is
$$
  \+Z(\+G)=\left\{\sum_{g\in\+G}t_g g:t_g\in[-1,1]\right\}
  \defeq \sum_{g\in\+G}[-g,g].
$$
Its support function measures its extent in a given direction:
$$
  \sigma_{\+Z(\+G)}(\gamma)
  =\max_{y\in\+Z(\+G)}\inner{\gamma}{y}
  =\sum_{g\in\+G}\abs{\inner{\gamma}{g}}.
$$
For the root zonotope, this gives
$$
  2\DTV{\^P}{\^Q}
  =\sum_{x\in\Omega}\abs{\^P(x)-\^Q(x)}
  =\sigma_{\+Z(\+G^{\mathrm{root}})}(\gamma).
$$
\Cref{fig:tv-support-geometry} illustrates this identity for $\+Z=\+Z(\+G^{\mathrm{root}})$. Measuring the horizontal coordinate by $\inner{\gamma}{y}$, the furthest extent of $\+Z$ in direction $\gamma$ is $\sigma_{\+Z}(\gamma)$. The blue segment runs from zero to this value, and TV-distance is one half of it.

\begin{figure}[htbp]
  \centering
  \begin{tikzpicture}[
    x=.95cm,y=.95cm,
    font=\small,
    direction/.style={-{Latex[length=1.8mm]},line width=.6pt},
    support/.style={blue!65!black,line width=1.2pt,-{Latex[length=2mm]}}
  ]
    \path[draw=black!55,fill=blue!6,line width=.7pt]
      (-2.7,-.9)--(-1.3,-1.8)--(1.2,-.9)
      --(2.7,.9)--(1.3,1.8)--(-1.2,.9)--cycle;
    \node at (.25,.8) {$\+Z=\+Z(\+G^{\mathrm{root}})$};
    \draw[direction,black!45] (-3.1,0)--(4,0)
      node[right,black] {$\inner{\gamma}{y}$};
    \node at (3.65,.3) {$\gamma$};
    \draw[black!55,dashed] (2.7,-1.45)--(2.7,2.05);
    \node[anchor=west] at (2.85,1.65)
      {$\inner{\gamma}{y^*}=\sigma_{\+Z}(\gamma)$};
    \draw[blue!65!black,densely dotted,line width=.8pt]
      (2.7,.9)--(2.7,0);
    \draw[black!55,line width=.4pt]
      (2.56,0)--(2.56,.14)--(2.7,.14);
    \draw[support] (0,0)--(2.7,0)
      node[pos=.3,below=5pt] {$\sigma_{\+Z}(\gamma)$};
    \fill (0,0) circle (1.3pt) node[above left] {$0$};
    \fill (2.7,.9) circle (1.3pt) node[above left] {$y^*$};
    \fill[blue!65!black] (2.7,0) circle (1.3pt);
    \node at (.7,-2.35)
      {$\displaystyle \DTV{\^P}{\^Q}
        =\frac12\sigma_{\+Z}(\gamma)$};
  \end{tikzpicture}
  \caption{The blue segment represents $\sigma_{\+Z}(\gamma)=2\DTV{\^P}{\^Q}$,
  the extent of the root zonotope in direction $\gamma$.}
  \label{fig:tv-support-geometry}
\end{figure}

The support function is easy to evaluate from an explicit representation. The obstacle is that the exact root zonotope has an exponentially large representation.

\subsection{A Recursion for the Zonotope}
Let $\+Z_i$ be the zonotope associated with the first $i$ coordinates of the product mixtures. The initial zonotope is $\+Z_0=[-\mathbf 1,\mathbf 1]$, where $\mathbf 1\in\mathbb R^K$ is the all-ones vector, and the final zonotope is $\+Z_n=\+Z(\+G^{\mathrm{root}})$. For coordinate $i$ and symbol $v\in[q]$, define the diagonal map
$$
  D_{i,v}=\operatorname{diag}\left(
    \^P_i^{(1)}(v),\ldots,\^P_i^{(k_1)}(v),
    \^Q_i^{(1)}(v),\ldots,\^Q_i^{(k_2)}(v)\right).
$$
The product structure gives the geometric recurrence
$$
  \+Z_i=\sum_{v\in[q]}D_{i,v}\+Z_{i-1},
  \qquad i=1,\ldots,n,
$$
where the sum is a Minkowski sum. Each term accounts for one possible symbol at coordinate $i$. \Cref{fig:coordinate-circuit} displays this update as a circuit of linear maps and sums.

\begin{figure}[htbp]
  \centering
  \begin{tikzpicture}[
    x=.95cm,y=.9cm,
    font=\small,
    wire/.style={-{Latex[length=1.7mm]},draw=black!65,line width=.65pt},
    gate/.style={circle,draw=black!65,fill=blue!6,line width=.7pt,
      minimum size=7mm,inner sep=0pt},
    factor/.style={draw=black!45,fill=black!3,line width=.6pt,
      minimum width=10mm,minimum height=6mm,inner sep=3pt}
  ]
    \node[factor,draw=blue!65!black,fill=blue!6] (input) at (0,0)
      {$\+Z_{i-1}$};
    \node[gate] (timesone) at (-3.8,1.65) {$\times$};
    \node[gate] (timestwo) at (0,1.65) {$\times$};
    \node[gate] (timeslast) at (3.8,1.65) {$\times$};
    \node[factor] (mapone) at (-5.1,1.65) {$D_{i,1}$};
    \node[factor] (maptwo) at (-1.3,1.65) {$D_{i,2}$};
    \node[factor] (maplast) at (2.5,1.65) {$D_{i,q}$};
    \node at (1.15,1.65) {$\cdots$};
    \node[gate,draw=blue!65!black] (sum) at (0,3.2) {$+$};
    \node[above=5pt] at (sum.north) {$\+Z_i$};

    % A shared input feeds every diagonal-scaling branch.
    \draw[wire,blue!65!black] (input.north)--(timesone.south);
    \draw[wire,blue!65!black] (input.north)--(timestwo.south);
    \draw[wire,blue!65!black] (input.north)--(timeslast.south);
    \draw[wire] (mapone.east)--(timesone.west);
    \draw[wire] (maptwo.east)--(timestwo.west);
    \draw[wire] (maplast.east)--(timeslast.west);
    \draw[wire] (timesone.north)--(sum.south west);
    \draw[wire] (timestwo.north)--(sum.south);
    \draw[wire] (timeslast.north)--(sum.south east);
  \end{tikzpicture}
  \caption{The circuit applies the diagonal maps $D_{i,v}$ at its $\times$
  nodes and combines their outputs by a Minkowski sum at the $+$ node.}
  \label{fig:coordinate-circuit}
\end{figure}

Both operations preserve set inclusion and a common multiplicative approximation factor. Consequently, replacing the input zonotope by a relative approximation gives an output with the same relative guarantee. The only additional error comes from compressing that output.

\subsection{Compressing the Recursion}
The compression lemma replaces a represented zonotope $\+Z$ by a zonotope $\underline{\+Z}$ with only $O_K(\eta^{-K})$ generators, satisfying
$$
  \underline{\+Z}\subseteq \+Z\subseteq(1+\eta)\underline{\+Z}.
$$
For fixed dimension $K$, this approximation can be computed deterministically in time polynomial in the representation size and $1/\eta$; see \Cref{lem:compression}.

As illustrated in \Cref{fig:compression-relative-support}, expanding $\underline{\+Z}$ about the origin by a factor of $1+\eta$ covers $\+Z$. Consequently, every support value has the same relative guarantee:
$$
  \sigma_{\underline{\+Z}}(\gamma)\le\sigma_{\+Z}(\gamma)
  \le(1+\eta)\,\sigma_{\underline{\+Z}}(\gamma).
$$
This is the geometric guarantee needed for a relative approximation of TV-distance.

\begin{figure}[htbp]
  \centering
  \begin{tikzpicture}[
    x=.85cm,y=.85cm,
    font=\small,
    direction/.style={-{Latex[length=1.8mm]},line width=.6pt},
    exact/.style={black!65,line width=.8pt},
    compressed/.style={blue!65!black,line width=.9pt},
    expanded/.style={blue!65!black,dashed,line width=.8pt},
    extent/.style={-{Latex[length=1.8mm]},line width=1pt}
  ]
    % The three outlines satisfy underline Z subset Z subset 1.4 underline Z.
    \begin{scope}[scale=1.4]
      \draw[expanded]
        (-2.235,-.81)--(-1.115,-1.53)--(.885,-.81)
        --(2.235,.81)--(1.115,1.53)--(-.885,.81)--cycle;
    \end{scope}
    \path[draw=black!65,fill=black!5,line width=.8pt]
      (-2.7,-.9)--(-1.3,-1.8)--(1.2,-.9)
      --(2.7,.9)--(1.3,1.8)--(-1.2,.9)--cycle;
    \path[draw=blue!65!black,fill=blue!12,line width=.9pt]
      (-2.235,-.81)--(-1.115,-1.53)--(.885,-.81)
      --(2.235,.81)--(1.115,1.53)--(-.885,.81)--cycle;
    \draw[direction,black!45] (-3.5,0)--(4.2,0)
      node[right,black] {$\inner{\gamma}{y}$};
    \node at (3.85,.3) {$\gamma$};
    \fill (0,0) circle (1.3pt) node[above left] {$0$};

    \draw[exact] (3.8,2.05)--(4.35,2.05)
      node[right,black] {$\+Z$};
    \draw[compressed] (3.8,1.5)--(4.35,1.5)
      node[right] {$\underline{\+Z}$};
    \draw[expanded] (3.8,.95)--(4.35,.95)
      node[right] {$(1+\eta)\underline{\+Z}$};

    \fill[blue!65!black] (2.235,.81) circle (1.2pt);
    \fill[black!65] (2.7,.9) circle (1.2pt);
    \fill[blue!65!black] (3.129,1.134) circle (1.2pt);
    \draw[blue!45,densely dotted] (2.235,.81)--(2.235,-2.65);
    \draw[black!40,densely dotted] (2.7,.9)--(2.7,-3.2);
    \draw[blue!45,densely dotted] (3.129,1.134)--(3.129,-3.75);
    \draw[black!25,densely dotted] (0,0)--(0,-3.75);

    \draw[extent,blue!65!black] (0,-2.65)--(2.235,-2.65);
    \node[anchor=west,blue!65!black] at (3.8,-2.65)
      {$\sigma_{\underline{\+Z}}(\gamma)$};
    \draw[extent,black!65] (0,-3.2)--(2.7,-3.2);
    \node[anchor=west] at (3.8,-3.2) {$\sigma_{\+Z}(\gamma)$};
    \draw[extent,blue!65!black,dashed] (0,-3.75)--(3.129,-3.75);
    \node[anchor=west,blue!65!black] at (3.8,-3.75)
      {$(1+\eta)\sigma_{\underline{\+Z}}(\gamma)$};
    \node at (1.25,-4.5)
      {$\displaystyle
        \sigma_{\underline{\+Z}}(\gamma)\le\sigma_{\+Z}(\gamma)
        \le(1+\eta)\sigma_{\underline{\+Z}}(\gamma)$};
  \end{tikzpicture}
  \caption{Compression preserves support values within a factor of $1+\eta$
  by maintaining $\underline{\+Z}\subseteq \+Z\subseteq(1+\eta)\underline{\+Z}$.}
  \label{fig:compression-relative-support}
\end{figure}

Returning to product mixtures, start with $\underline{\+Z}_0=\+Z_0$. At each coordinate, evaluate the geometric recurrence on the current approximation and compress its output. The linear maps and Minkowski sum preserve the existing approximation factor, while compression multiplies it by at most $1+\eta$. After $n$ steps,
$$
  \underline{\+Z}_n\subseteq \+Z_n
  \subseteq(1+\eta)^n\underline{\+Z}_n.
$$
Choose $\eta=(1+\varepsilon)^{1/n}-1$ and return $\hat{d}=\frac12\sigma_{\underline{\+Z}_n}(\gamma)$. By monotonicity of support functions, the output satisfies
$$
  \hat{d}\le\DTV{\^P}{\^Q}\le(1+\varepsilon)\hat{d}.
$$
Each compressed representation has size $O_K((n/\varepsilon)^K)$, so every step takes polynomial time when $K$ is fixed.

\Cref{sec:abstract-framework} formalizes this approach using local operations called \emph{gates}. In our framework, each gate must preserve relative zonotope approximations with a controlled accumulation of error. Product mixtures use the linear maps and Minkowski sums above. Mixtures of Markov chains additionally keep track of a boundary state, while the latent-tree application combines contributions from different subtrees. In each case, the algorithm alternates these operations with compression and evaluates the final support function to recover TV-distance. Subsequent work~\cite{BCFM26} further formalizes this approach in the setting of probabilistic circuits.

\section{The Abstract Framework}\label{sec:abstract-framework}
In this section, we set up an abstract framework for deterministically approximating the TV-distance between $\^P$ and $\^Q$ by a represented zonotope. First, we encode the signed mass $\^P(x)-\^Q(x)$ by low-dimensional generators. Second, we introduce a compression procedure to reduce the number of generators while preserving the corresponding support functions up to a relative factor. Third, we give an abstract FPTAS for the TV-distance problem when $\^P$ and $\^Q$ satisfy a certain assumption.

\subsection{TV-Distance Realization via Support Functions}

\begin{definition}\label{def:tv-realization}
For probability distributions $\^P$ and $\^Q$ over the finite space $\Omega=\{x_1,\ldots,x_{\abs{\Omega}}\}$, a TV-realization consists of a constant positive integer $K$, a known vector $\gamma\in\mathbb R^K$, and a multiset
$$ \mathcal G^{\mathrm{root}}=\{g(x):x\in\Omega\} \subseteq \^R^{K}, $$
such that 
$$ \inner{\gamma}{g(x)}=\^P(x)-\^Q(x) \qquad\text{for all }x\in{\Omega}. $$
\end{definition}

For discrete distributions, a TV-realization always exists, for instance by
taking $K=2$, $\gamma=(1,-1)$, and $g(x)=(\^P(x),\^Q(x))$.  The choice of
realization is not unique; in applications we use coordinates that expose the
recursive structure of the model.
%For discrete distributions, the representation in \Cref{def:tv-realization} is not restrictive. Indeed, for any pair $\^P$, $\^Q$ over $\Omega$, one may take $K=2$, $\gamma = (1,-1)$, and $g(x) = (\^P(x),\^Q(x))$. Thus $\inner{\gamma}{g(x)}=\^P(x)-\^Q(x)$. This realization is not unique. Different choices of coordinates may encode different structural information about the distribution. For example, suppose $\^P = \sum_{r=1}^{k_1}\alpha^{(r)}\^P^{(r)}$ and $\^Q = \sum_{r=1}^{k_2}\beta^{(r)}\^Q^{(r)}$ are two mixture models. Then a useful realization has $K=k_1+k_2$, $\gamma=(\alpha^{(1)},\ldots,\alpha^{(k_1)},-\beta^{(1)},\ldots,-\beta^{(k_2)})$, and $g(x) = (\^P^{(1)}(x), \ldots, \^P^{(k_1)}(x), \^Q^{(1)}(x), \ldots, \^Q^{(k_2)}(x))$. This realization separates the mixture weights from the component probabilities, which is what allows the component structure to be exposed from the mixture models.
%For general distributions $\^P$ and $\^Q$, a trivial approach for \Cref{def:tv-realization} is to let $K=2$, $\gamma = (1, -1)$, and $g(x) = (\^P(x), \^Q(x))$. However, this representation is useful when $\^P$ and $\^Q$ are mixtures of distributions. Formally, let $K=k_1+k_2$, $\^P = \sum_{i=1}^{k_1} \alpha^{(i)} \^P^{(i)}$ and $\^Q = \sum_{j=1}^{k_2} \beta^{(j)} \^Q^{(j)}$, where $\alpha^{(i)}, \beta^{(j)} \geq 0$ and $\sum_{i=1}^{k_1} \alpha^{(i)} = \sum_{j=1}^{k_2} \beta^{(j)} = 1$. Then we can let $\gamma = (\alpha^{(1)}, \ldots, \alpha^{(k_1)}, -\beta^{(1)}, \ldots, -\beta^{(k_2)})$ and $g(x) = (\^P^{(1)}(x), \ldots, \^P^{(k_1)}(x), \^Q^{(1)}(x), \ldots, \^Q^{(k_2)}(x))$.

\begin{definition}[Represented Zonotope]
For a positive integer $K$, and a finite multiset $\mathcal G = \{g_1, \ldots, g_m\} \subseteq\mathbb R^K$, the represented zonotope $\+Z(\mathcal G)$ is the centrally symmetric zonotope
$$ \+Z(\mathcal G) = \left\{\sum_{i=1}^m t_i g_i: t_i\in[-1,1]\right\} \defeq \sum_{i=1}^m[-g_i,g_i]. $$
\end{definition}

We remark that the interface dimension $K$ is a constant positive integer only related to the structure of the probability distribution. This fixed-dimensionality is the source of the approximation scheme.

For a compact convex region $C\subseteq\mathbb R^K$, the support function of $C$ is defined as
$$ \sigma_C(z)=\max_{y\in C}\inner{z}{y}. $$
It follows directly from the definition that the support function is monotone under set inclusion and positively homogeneous: for compact convex sets $C\subseteq C'\subseteq\mathbb R^K$ and any scalar $\lambda\ge0$,
$$ \sigma_C(z)\le\sigma_{C'}(z), \quad\text{and}\quad \sigma_{\lambda C}(z)=\lambda\,\sigma_C(z), \quad\text{for all }z\in\mathbb R^K. $$
We use this monotonicity to transfer zonotope containments to the support function in the direction $\gamma$.
For represented zonotopes, the support function has a particularly simple form, and TV-distance can be recovered by evaluating the support function at the direction $\gamma$.

%\begin{lemma}\label{lem:zonotope-support}
%Let $\mathcal G=\{g_1,\ldots,g_m\}\subseteq\mathbb R^K$ and $\+Z(\mathcal G)=\sum_{i=1}^m[-g_i,g_i]$.  Then for every $z\in\mathbb R^K$,
%$$ \sigma_{\+Z(\mathcal G)}(z) = \sum_{i=1}^m\abs{\inner{z}{g_i}}. $$
%\end{lemma}

%\begin{proof}
%Every point of $\+Z(\mathcal G)$ can be written as $\sum_i t_i g_i$ with $t_i\in[-1,1]$.  Therefore
%\begin{equation*} \sigma_{\+Z(\mathcal G)}(z) = \max_{t_i\in[-1,1]} \sum_i t_i\inner{z}{g_i} = \sum_i \abs{\inner{z}{g_i}}. \qedhere \end{equation*}
%\end{proof}

\begin{lemma}[Support-function representation of TV-distance]
\label{lem:tv-support}
For $\^P$ and $\^Q$ satisfying \Cref{def:tv-realization} with root generator multiset $\mathcal G^{\mathrm{root}}=\{g(x):x\in\Omega\}$ and vector $\gamma$, we have
$$ d_{\mathrm{TV}}(\^P,\^Q) = \frac{1}{2}\sigma_{\+Z(\mathcal G^{\mathrm{root}})}(\gamma). $$
\end{lemma}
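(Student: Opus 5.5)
The plan is to compute the support function of a represented zonotope explicitly, as a sum of absolute values, and then substitute the defining identity of a TV-realization. Both steps are elementary, and the only point requiring any care is justifying the maximization formula for the support function.

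First, fix an enumeration $\mathcal G^{\mathrm{root}}=\{g_1,\ldots,g_m\}$ with $m=\abs{\Omega}$ and $g_j=g(x_j)$. Every point of $\+Z(\mathcal G^{\mathrm{root}})$ has the form $y=\sum_j t_j g_j$ with $t_j\in[-1,1]$. For any $z\in\mathbb R^K$ this gives $\inner{z}{y}=\sum_j t_j\inner{z}{g_j}$.

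The maximum of this expression over the box $[-1,1]^m$ separates coordinatewise. Each coordinate is maximized by choosing $t_j=\operatorname{sign}\inner{z}{g_j}$, with any value allowed when the inner product vanishes. Hence
$$\sigma_{\+Z(\mathcal G^{\mathrm{root}})}(z)=\sum_{j=1}^m\abs{\inner{z}{g_j}}.$$
The set is compact, being a continuous image of a box, so the maximum is attained. Because $\mathcal G^{\mathrm{root}}$ is a multiset, repeated vectors are counted with multiplicity. This matches the sum over $x\in\Omega$, since there is exactly one generator per outcome.

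Second, specialize to $z=\gamma$ and use the realization identity $\inner{\gamma}{g(x)}=\^P(x)-\^Q(x)$ from \Cref{def:tv-realization}. This yields
$$\sigma_{\+Z(\mathcal G^{\mathrm{root}})}(\gamma)=\sum_{x\in\Omega}\abs{\^P(x)-\^Q(x)}=2\,\DTV{\^P}{\^Q},$$
by the definition of TV-distance given in the introduction. Dividing by two gives the claim.

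There is no real obstacle here. The only step needing a line of justification is the separable maximization over the box, in particular confirming that the supremum over the Minkowski sum of segments is attained.
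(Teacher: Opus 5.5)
Your proposal is correct and matches the paper's proof: write each point of $\+Z(\mathcal G^{\mathrm{root}})$ as $\sum_x t_x g(x)$, maximize coordinatewise over $[-1,1]^{\Omega}$ by a sign choice to get $\sum_x\abs{\inner{\gamma}{g(x)}}$, and then substitute $\inner{\gamma}{g(x)}=\^P(x)-\^Q(x)$. The only cosmetic difference is that you derive the sum-of-absolute-values formula for a general direction $z$ before specializing to $\gamma$, while the paper works with $\gamma$ directly.
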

\begin{proof}
  Every point of $\+Z(\mathcal G^{\mathrm{root}})$ can be written as $\sum_{x\in\Omega} t_x g(x)$ with $t_x\in[-1,1]$.  Therefore
  \begin{align}\label{eq:support-zonotope}
    \sigma_{\+Z(\mathcal G^{\mathrm{root}})}(\gamma)
    &= \max_{t\in[-1,1]^{\Omega}}\inner{\gamma}{\sum_{x\in\Omega}t_xg(x)}
    = \max_{t\in[-1,1]^{\Omega}}\sum_{x\in\Omega} t_x\inner{\gamma}{g(x)}
    = \sum_{x\in\Omega}\abs{\inner{\gamma}{g(x)}},
  \end{align}
where $t_x=1$ if $\inner{\gamma}{g(x)}\geq 0$ and $t_x=-1$ otherwise. By \Cref{def:tv-realization}, $\inner{\gamma}{g(x)}=\^P(x)-\^Q(x)$, thus 
\begin{equation*} 
  \sigma_{\+Z(\mathcal G^{\mathrm{root}})}(\gamma) = \sum_{x\in\Omega} \abs{\^P(x)-\^Q(x)} = 2d_{\mathrm{TV}}(\^P,\^Q). \qedhere 
\end{equation*}
\end{proof}

\subsection{Zonotope Compression}

The exact represented zonotopes can have exponentially many generators when $\abs{\Omega}$ is exponentially large. In this subsection, we give a way to compress a represented zonotope up to a relative factor. The compression algorithm follows a simple geometric intuition: we first put the zonotope in a position of bounded aspect ratio via an invertible linear transformation, then merge the generators with nearly the same direction, and finally map the compressed generators back by the inverse linear transformation.

\begin{lemma}\label{lem:lewis-rounding}
Let $\+G = \{g_1, \ldots, g_m\} \subseteq \^R^K$, and $\+Z(\+G)=\sum_{i=1}^m[-g_i,g_i]$ be the corresponding represented zonotope whose nonzero generators span an $r$-dimensional subspace $E$.  Write $B_2^r=\{x\in\mathbb R^r:\|x\|_2\le 1\}$ for the Euclidean unit ball in $\mathbb R^r$.  There is a deterministic algorithm that computes an invertible linear map $A:E\to\mathbb R^r$ in time $ O_K(m\log\log(m+2))$, such that
$$ B_2^r\subseteq A\+Z\subseteq2\sqrt r\,B_2^r.$$
\end{lemma}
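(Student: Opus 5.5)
We will prove the lemma by computing $\ell_1$ Lewis weights of the generator matrix and taking $A$ to be the associated Lewis change of density, following \cite{Lewis78} and the algorithm of \cite{CohenP15}.

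\textbf{Reduction to coordinates.} Compute an orthonormal basis of $E$ by Gram--Schmidt on the nonzero generators. This takes $O_K(m)$ time. Write each generator in these coordinates, so that we obtain a matrix $G\in\mathbb R^{m\times r}$ of rank $r$ whose rows are the generators. Then for every $z\in\mathbb R^r$,
$$\sigma_{\+Z}(z)=\sum_i\abs{\inner{z}{g_i}}=\|Gz\|_1 .$$
Hence $\+Z$ is the polar of the unit ball of the norm $\|G\cdot\|_1$.

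\textbf{Choosing $A$ from Lewis weights.} Recall that the $\ell_1$ Lewis weights $w_i>0$ are the unique weights with
$$w_i=\bigl(g_i^\top M^{-1}g_i\bigr)^{1/2},\qquad M=\sum_i g_ig_i^\top/w_i ,$$
and that they satisfy $\sum_i w_i=r$.

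Set $A=M^{-1/2}$ and $h_i=Ag_i$. Then $\sum_i h_ih_i^\top/w_i=I$ and $\|h_i\|_2=w_i$. For a unit vector $u$, the support function of $A\+Z$ is $\sum_i\abs{\inner{u}{h_i}}$, and we bound it from both sides.

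\emph{Upper bound.} By Cauchy--Schwarz,
$$\sum_i\abs{\inner{u}{h_i}}\le\Bigl(\sum_i w_i\Bigr)^{1/2}\Bigl(\sum_i\inner{u}{h_i}^2/w_i\Bigr)^{1/2}=\sqrt r .$$

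\emph{Lower bound.} Since $\abs{\inner{u}{h_i}}\le\|h_i\|_2=w_i$, we have $\inner{u}{h_i}^2/w_i\le\abs{\inner{u}{h_i}}$. Therefore
$$1=\sum_i\inner{u}{h_i}^2/w_i\le\sum_i\abs{\inner{u}{h_i}} .$$

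So every support value of $A\+Z$ lies in $[1,\sqrt r]$, which gives $B_2^r\subseteq A\+Z\subseteq\sqrt r\,B_2^r$ for exact weights.

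\textbf{Approximate weights.} We can only compute the weights approximately, and this is where the factor $2$ comes from. The plan is to run the fixed-point iteration of \cite{CohenP15},
$$w_i\leftarrow\bigl(g_i^\top(G^\top W^{-1}G)^{-1}g_i\bigr)^{1/2}.$$
For $p=1<4$ this map is a contraction in log-ratio with factor $1/2$. Starting from all-ones weights, the initial log-ratio error is bounded, and $O(\log\log(m+2))$ iterations reduce it to a small constant, so that the weights are $(1\pm c)$-approximate.

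Each iteration consists of the following:
\begin{itemize}
\item forming an $r\times r$ matrix, in $O(mr^2)$ time;
\item inverting it, in $O(r^3)$ time, using the exact inverse square root primitive;
\item computing $m$ quadratic forms.
\end{itemize}
This is $O_K(m)$ per iteration, for a total of $O_K(m\log\log(m+2))$.

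\textbf{Main obstacle.} The hardest step is bounding the initial log-ratio error independently of the numerical values of the generators. Any leverage score, hence any Lewis weight, lies in $(0,1]$, but individual weights can be arbitrarily small. A naive initialization would therefore give an error depending on bit lengths rather than only on $m$.

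I would handle this as follows:
\begin{itemize}
\item use the fact that the first iterate from uniform weights is at most $\sqrt m$ times off, via leverage-score comparison;
\item then apply the contraction, giving $\log\log m$ rounds.
\end{itemize}

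It remains to check robustness. With $(1\pm c)$-approximate weights $\tilde w$, set $\tilde M=\sum_i g_ig_i^\top/\tilde w_i$ and $A=\tilde M^{-1/2}$. Two facts now hold only approximately:
\begin{itemize}
\item $\sum_i\tilde w_i\le(1+c)r$;
\item $\|h_i\|_2\le(1+c)^{O(1)}\tilde w_i$.
\end{itemize}
Repeating both Cauchy--Schwarz bounds with these approximate facts gives $1/(1+c)^{O(1)}\le\sigma_{A\+Z}(u)\le(1+c)^{O(1)}\sqrt r$. Rescaling $A$ so that the lower bound becomes $1$ yields the inclusion $B_2^r\subseteq A\+Z\subseteq2\sqrt r\,B_2^r$ for small constant $c$.
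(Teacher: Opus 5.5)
Your proposal is correct and takes essentially the same route as the paper. You reduce to full-column-rank coordinates and run the Cohen--Peng fixed-point iteration from the all-ones vector; the $\sqrt m$ first-update bound plus the $\alpha\mapsto\sqrt\alpha$ contraction gives a constant-factor approximation in $O(\log\log(m+2))$ rounds, and a rescaled $\widetilde M^{-1/2}$ with the same two Cauchy--Schwarz bounds puts every support value in $[1,2\sqrt r]$. The differences are minor: the paper handles approximate weights through the sandwich $\alpha^{-1}M\preceq\widetilde M\preceq\alpha M$ rather than redoing Cauchy--Schwarz, proves the first-update bound via $\frac1m M^*\preceq G^\top G\preceq M^*$ (from $w_i^*\le 1$ and $\|Gz\|_1^2\le m\|Gz\|_2^2$), and explicitly deletes zero generators, which you should also do since the iteration divides by $w_i$.
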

The procedure for computing $A$ and its analysis are given in \Cref{sec:appendix}.

\begin{lemma}[Compression of represented zonotope]
\label{lem:compression}
Given $\mathcal{G} = \{g_1, \ldots, g_m\} \subseteq \^R^K$, let  
$\+Z=\+Z(\mathcal G)=\sum_{i=1}^m[-g_i,g_i]\subseteq\mathbb R^K$ be the corresponding represented zonotope.
Given $\eta\in(0,1)$, there exists an algorithm that constructs a compressed multiset $\underline{\+G} = \{\underline{g}_1, \ldots, \underline{g}_M\}$ with corresponding represented zonotope
$$\underline{\+Z}=\+Z(\underline{\+G})=\sum_{i=1}^M[-\underline g_i,\underline g_i] $$
such that
$\underline{\+Z}\subseteq \+Z\subseteq(1+\eta)\underline{\+Z}$
and $M=O(\eta^{-K})$, in time $O_K(m\log\log(m+2) + m\eta^{-K})$.
%$$ M\le\left\lceil\left(\frac{9L_{\le K}}{\eta}\right)^K\right\rceil,\qquad L_{\le K}:=\max_{1\le s\le K}2\kappa_s^{-1}\sqrt s. $$
\end{lemma}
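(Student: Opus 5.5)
The plan is to reduce to a well-conditioned position with \Cref{lem:lewis-rounding}, merge generators that point in nearly the same direction, and map back. Both containments will be verified through support functions. For compact convex centrally symmetric sets, $C\subseteq C'$ holds if and only if $\sigma_C\le\sigma_{C'}$ pointwise. Also, $\sigma_{\+Z(\+G)}(z)=\sum_{g\in\+G}\abs{\inner{z}{g}}$ by the computation in the proof of \Cref{lem:tv-support}.

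\textbf{Step 1 (rounding).} Discard zero generators, since they do not change $\+Z$. Let $E$ be the span of the remaining generators, with $r=\dim E\le K$. Apply \Cref{lem:lewis-rounding} to obtain $A:E\to\mathbb R^r$ with $B_2^r\subseteq A\+Z\subseteq 2\sqrt r\,B_2^r$. Set $h_i=Ag_i$. Since $A$ is an invertible linear map on $E$, and compression produces generators that stay in $E$, it suffices to compress $\{h_i\}$ and pull the result back by $A^{-1}$. Both the containments and the generator count are preserved under this pullback.

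\textbf{Step 2 (bucketing and merging).} Replace each $h_i$ by $\pm h_i$ so that it lies in a fixed closed half-space; this does not change the segment $[-h_i,h_i]$. Fix $\delta=c\,\eta/K$ for a small constant $c$. Partition the unit sphere $S^{r-1}$ into cells of Euclidean diameter at most $\delta$. One way is to intersect the cube-grid cells of side $\delta/\sqrt r$ with the sphere after normalizing each vector. This gives $O_K(\delta^{-(r-1)})=O_K(\eta^{-K})$ cells, and the cell of a given vector is computable in $O_K(1)$ time. For each cell $c$, output the merged generator $\underline h_c=\sum_{i\in c}h_i$.

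The inclusion $\underline{\+Z}\subseteq\+Z$ is immediate: choosing a common coefficient $t$ for all $h_i$ in a cell shows $[-\underline h_c,\underline h_c]\subseteq\sum_{i\in c}[-h_i,h_i]$. For the reverse inclusion, write $h_i=\|h_i\|(u_c+e_i)$ with $\|e_i\|\le\delta$, where $u_c$ is a fixed point of cell $c$. Then for every $z$, by the triangle inequality,
$$\sum_{i\in c}\abs{\inner{z}{h_i}}\le\abs{\inner{z}{\underline h_c}}+2\delta\|z\|\sum_{i\in c}\|h_i\|.$$
Summing over cells gives
$$\sigma_{A\+Z}(z)\le\sigma_{A\underline{\+Z}}(z)+2\delta\|z\|\sum_i\|h_i\|.$$

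\textbf{Step 3 (the key estimate, expected main obstacle).} We must bound $\sum_i\|h_i\|$ by $O(r)$. Bounding each segment separately, via $[-h_i,h_i]\subseteq 2\sqrt r B_2^r$, only controls individual norms, not their sum. Instead, average over a uniformly random unit vector $z$. Since $\E{\abs{\inner{z}{h}}}=c_r\|h\|$ with $c_r=\Theta(1/\sqrt r)$,
$$c_r\sum_i\|h_i\|=\E{\sigma_{A\+Z}(z)}\le 2\sqrt r,$$
so $\sum_i\|h_i\|=O(r)=O(K)$. Combining this with $\|z\|\le\sigma_{A\+Z}(z)$, which holds because $B_2^r\subseteq A\+Z$, yields
$$\sigma_{A\+Z}(z)\le\sigma_{A\underline{\+Z}}(z)+O(\delta K)\,\sigma_{A\+Z}(z).$$
Choosing $c$ small enough makes $O(\delta K)\le \eta/(1+\eta)$. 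Rearranging gives $\sigma_{A\+Z}\le(1+\eta)\sigma_{A\underline{\+Z}}$, hence $A\+Z\subseteq(1+\eta)A\underline{\+Z}$.

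\textbf{Running time.} Computing $A$ costs $O_K(m\log\log(m+2))$ by \Cref{lem:lewis-rounding}. Applying $A$, normalizing, and bucketing cost $O_K(m)$, plus the cost of initializing and scanning the $O_K(\eta^{-K})$ buckets. Pulling back costs $O_K(\eta^{-K})$. All of this fits within $O_K(m\log\log(m+2)+m\eta^{-K})$.
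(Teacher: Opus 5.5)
Your proposal is correct and follows essentially the same route as the paper: Lewis rounding, grouping generators by direction at scale $\delta=\Theta(\eta/K)$ and summing each group, a loss of at most $2\delta\|z\|\sum_i\|h_i\|$, and the bound $\sum_i\|h_i\|\le 2\sqrt r/c_r$ obtained by averaging the support function over a uniformly random direction. Your grid-cell bucketing and direct triangle-inequality loss bound are cosmetic variants of the paper's nearest-point assignment to a lattice net and its sign case split. One caveat: your $O_K(1)$ cell lookup needs a floor operation that is not in the paper's arithmetic model, but a brute-force search over the $O_K(\eta^{-K})$ cells (as the paper does for its net) avoids it and still fits the stated time bound.
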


\begin{proof}
We give the fixed-dimensional construction.  Let $E=\operatorname{span}\{g_1,\ldots,g_m\}$.  If $E=\{0\}$, return the zero zonotope.  Otherwise let $r=\dim E\le K$ and work inside $E$.

Apply \cref{lem:lewis-rounding} to compute $A:E\to\mathbb R^r$ such that
$B_2^r\subseteq A\+Z\subseteq2\sqrt r\,B_2^r$.
Write $y_i=Ag_i$.  The containment gives $\sigma_{A\+Z}(u)\in[1, 2\sqrt r]$ for every unit direction $u$. 

For algorithmic concreteness, let $\delta<1$ be a parameter to be fixed later. We use an explicit deterministic lattice net for unit directions.  Let
$$
L_\delta=\left\lceil\frac{\sqrt r}{\delta}\right\rceil+1,
\qquad
\+ N_\delta
=
\left\{
\frac{a}{\|a\|_2}:a\in\mathbb Z^r\setminus\{0\},\ \|a\|_\infty\le L_\delta+1
\right\}.
$$
This set is a $\delta$-net for unit directions.  Indeed, fix a unit direction $u\in\mathbb R^r$, and let $v_k$ be a nearest integer to $L_\delta u_k$ for each coordinate $k$.  Then $\|v/L_\delta-u\|_2\le \sqrt r/(2L_\delta)<\delta/2<1$, which implies $v\ne0$. Also $\|v\|_\infty\le L_\delta+1$, and hence $v/\|v\|_2\in\+ N_\delta$. Moreover,
\begin{align*}
\left\|\frac{v}{\|v\|_2}-u\right\|_2&=\left\|\frac{v/L_\delta}{\|v/L_\delta\|_2}-u\right\|_2 \leq\left\|\frac{v/L_\delta}{\|v/L_\delta\|_2}-\frac{v}{L_\delta}\right\|_2+\left\|\frac{v}{L_\delta}-u\right\|_2 \overset{\text{(i)}}{\leq}2\left\|\frac{v}{L_\delta}-u\right\|_2 <\delta,
\end{align*}
where (i) holds because $\frac{v/L_\delta}{\|v/L_\delta\|_2}$ and $u$ are unit vectors, while the first one has the same direction with $v$.
Thus every unit direction is within distance $\delta$ of some point in $\+ N_\delta$.  Its cardinality satisfies
$$
|\+ N_\delta|\le (2L_\delta+3)^r=O_r(\delta^{-r}).
$$

Enumerate $\+ N_\delta=\{p_1,\ldots,p_M\}$ with $M=|\+N_\delta|$.  Zero generators $y_i$ are discarded, since they do not affect the represented zonotope.  For each nonzero $y_i$, choose a nearest net point by brute force:
$$
j(i)=\arg\min_{j\in[M]}
\left\|
\frac{y_i}{\|y_i\|_2}-p_j
\right\|_2.
$$
By the net property, the chosen index satisfies $\left\|\frac{y_i}{\|y_i\|_2}-p_{j(i)}\right\|_2\le \delta$.
For each $j\in[M]$, aggregate all generators assigned to $p_j$ by setting
$$\underline{y}_j=\sum_{i:j(i)=j}y_i,
\qquad
\underline Y=\sum_{j=1}^M\left[-\underline{y}_j,\underline{y}_j\right].$$
If no generator is assigned to $p_j$, then $\underline y_j=0$.  Since $[-\underline{y}_j,\underline{y}_j]\subseteq\sum_{i:j(i)=j}[-y_i,y_i]$ for every $j\in[M]$, we have $\underline Y\subseteq A\+Z$.

Next we prove that $(1+\eta)\underline Y$ can cover $A\+Z$ by choosing a proper $\delta$.  Fix a unit direction $u$.  For an index $j\in[M]$, write $y_i=\|y_i\|q_i$ for all $i$ with $j(i)=j$, where $\|q_i-p_j\|\le\delta$.  Set
$$ \alpha_j=\inner{u}{p_j}, \qquad h_i=\inner{u}{q_i-p_j}. $$
Then $\abs{h_i}\le\delta$. By \eqref{eq:support-zonotope}, the contribution of the original generators in this group to the support function is
$$\sigma_{\sum_{i:j(i)=j}[-y_i, y_i]}(u) = \sum_{i:j(i)=j}\left|\langle u, y_i\rangle\right| = \sum_{i:j(i)=j}\|y_i\|\abs{\alpha_j+h_i}, $$
whereas the contribution after aggregation is
$$\sigma_{[-\underline{y}_j, \underline{y}_j]}(u) = \left| \langle u, \underline{y}_j \rangle \right| = \left|\sum_{i:j(i)=j}\|y_i\|(\alpha_j+h_i)\right|. $$
If $\abs{\alpha_j}>\delta$, all terms $\alpha_j+h_i$ have the same sign and no loss occurs. If $\abs{\alpha_j}\leq\delta$, the loss is at most $2\delta\sum_{i:j(i)=j}\|y_i\|$. Thus
\begin{align*}
  \sigma_{A\+Z}(u)-\sigma_{\underline Y}(u) &=\sum_{j=1}^M\left(\sum_{i:j(i)=j}\|y_i\||\alpha_j+h_i| - \left|\sum_{i:j(i)=j}\|y_i\|(\alpha_j+h_i)\right|\right)\\
  &\leq \sum_{j=1}^M2\delta\sum_{i:j(i)=j}\|y_i\| = 2\delta \sum_{i=1}^{m}\|y_i\|.
\end{align*} 

Let $c_r=\E[v]{|\inner{v}{e}|}$ be a constant related only to $r$, where $v$ is a uniformly random unit vector in $\mathbb R^r$, and $e$ is an arbitrary fixed unit vector in $\^R^r$\footnote{In fact, by a standard spherical-coordinate calculation, $c_r=\frac{\Gamma(r/2)}{\sqrt\pi \Gamma((r+1)/2)}$.}.  By rotational invariance,
$$
\E[v]{\sigma_{A\+Z}(v)}
=\sum_{i=1}^m \E[v]{|\inner{v}{y_i}|}
=c_r\sum_{i=1}^m\|y_i\|.
$$
Since $\sigma_{A\+Z}(v)\leq2\sqrt r$ for every unit vector $v$, this gives
$$
\sum_{i=1}^m\|y_i\|\leq2\frac{\sqrt r}{c_r}.
$$
Let
$$ \delta=\frac{c_r\eta}{8\sqrt r},$$
then the preceding loss bound implies
$$
\sigma_{\underline Y}(u)\ge\sigma_{A\+Z}(u)-4\delta \frac{\sqrt r}{c_r}
=\sigma_{A\+Z}(u)-\frac{\eta}{2}
\geq (1-\eta/2)\sigma_{A\+Z}(u),
$$
where the last inequality uses $\sigma_{A\+Z}(u)\ge1$.  Since $(1-\eta/2)(1+\eta)\ge1$ for $0<\eta<1$, this implies $A\+Z\subseteq(1+\eta)\underline Y$. Mapping back by $A^{-1}$, define $\underline g_j=A^{-1}\underline y_j$ for $j\in[M]$ and return $\underline{\+G}=\{\underline g_j:j\in[M]\}$. Its represented zonotope is $\underline{\+Z}=A^{-1}\underline Y$, so it satisfies the desired containments.

The number of returned generators is $M=|\mathcal N_\delta|=O_r(\delta^{-r})$.  With $\delta=c_r\eta/(8\sqrt r)$ and $r\le K$, this gives $M=O_K(\eta^{-K})$. By \Cref{lem:lewis-rounding}, computing $A$ requires $O_K(m\log\log(m+2))$ time.  The lattice net has size $O_K(\eta^{-K})$, so exhaustive search for each $j(i)$ costs $O_K(m\eta^{-K})$ time in total; forming the aggregates and mapping them back costs no more. Thus the total running time is $O_K(m\log\log(m+2) + m\eta^{-K})$.
\end{proof}

\subsection{Admissible Gate}
Since the root generator multiset $\+G^{\mathrm{root}}$ may contain an exponential number of vectors, we cannot form it explicitly. Instead, we build this multiset recursively from small leaf multisets. We therefore need a condition ensuring that approximate inputs may be passed through one step of the recursion without destroying the zonotope containment.

\begin{definition}[Admissible gate]
\label{def:admissible-gate}
An \emph{arity-$a$ gate} $G$ maps finite generator multisets $\+G_1,\ldots,\+G_a\subseteq\mathbb R^K$ to a finite generator multiset $G(\+G_1,\ldots,\+G_a)\subseteq\mathbb R^K$; given explicit lists of the inputs, the output can be listed in time $T_G(\mu)$, where $\mu$ upper bounds the total size of the input and output multisets.

We say $G$ is \emph{admissible} with \emph{error map} $\phi:\mathbb R_+^a\to\mathbb R_+$ if it satisfies the following stability property: for every $\eta\in(0,1)$, whenever exact input multisets $\+G_i$ and approximate multisets $\underline{\+G}_i$ satisfy
$$ \+Z(\underline{\+G}_i)\subseteq \+Z(\+G_i)\subseteq(1+\eta)^{s_i}\+Z(\underline{\+G}_i) \qquad\text{for each }i\in[a] $$
with some $s_i\in\mathbb R_+$, then the exact output $\+Z=\+Z(G(\+G_1,\ldots,\+G_a))$ and the approximate output $\underline{\+Z}=\+Z(G(\underline{\+G}_1,\ldots,\underline{\+G}_a))$ satisfy
$$ \underline{\+Z}\subseteq \+Z\subseteq(1+\eta)^{\phi(s_1,\ldots,s_a)}\underline{\+Z}. $$

%An arity-$a$ gate $G$ takes finite generator multisets $\mathcal G_1,\ldots,\mathcal G_a\subseteq\mathbb R^K$ as input, and outputs another finite generator multiset $G(\mathcal G_1,\ldots,\mathcal G_a)\subseteq\mathbb R^K$ in time $T_G$.

%We call a gate $G$ admissible with error map $\phi:\mathbb R_+^a\to\mathbb R_+$ if the following stability property holds.
%For $\eta\in(0,1)$, every exact input multiset $\mathcal G_i$, with corresponding approximate multiset $\underline{\mathcal G}_i$ satisfies
%$$ \+Z(\underline{\mathcal G}_i)\subseteq \+Z(\mathcal G_i)\subseteq(1+\eta)^{s_i}\+Z(\underline{\mathcal G}_i), $$
%for some $s_i\in\^R_+$, the exact output of applying $G$ to the exact input multisets
%$\+Z=\+Z(G(\mathcal G_1,\ldots,\mathcal G_a))$, and the output of applying $G$ to the approximate multisets $\underline{\+Z}=\+Z(G(\underline{\mathcal G}_1,\ldots,\underline{\mathcal G}_a))$, satisfy
%$$ \underline{\+Z}\subseteq \+Z\subseteq(1+\eta)^{\phi(s_1,\ldots,s_a)}\underline{\+Z}. $$
\end{definition}

\subsection{Abstract FPTAS}

We assume that the error maps are explicitly computable within the corresponding gate-evaluation bounds $T_{G_i}(\mu_i)$, and the framework gives an FPTAS whenever the bound in \Cref{thm:abstract-fptas} is polynomial in the input size and $1/\varepsilon$ for fixed $K$.

\begin{assumption}
\label{assumption:admissible-gate-circuit}
Given $\^P$ and $\^Q$, let $\+G^{\mathrm{root}}$ be a root generator multiset with corresponding vector $\gamma$ as defined in \Cref{def:tv-realization}. There is a finite collection of explicit leaf generator multisets
$$ \+L=\{\+G^{\mathrm{leaf}}_1,\ldots,\+G^{\mathrm{leaf}}_\ell\} $$
such that $\gamma$ can be computed and $\+L$ can be constructed in time $T_0$, and a sequence of $N$ admissible gate applications $G_1,\ldots,G_N$ with error maps $\phi_1,\ldots,\phi_N$.  At step $i$, apply $G_i$ to obtain $\+G_i$ on input multisets $\+G^{(i)}_1,\ldots,\+G^{(i)}_{a_i}$, each of which is either in $\+L$ or $\+G_j$ for some $j<i$. After applying $G_N$, the final output is $\+G_N=\mathcal G^{\mathrm{root}}$.
\end{assumption}

For a gate sequence satisfying \cref{assumption:admissible-gate-circuit}, define its error weights recursively as follows.  Each leaf multiset $\+G^{\mathrm{leaf}}_j$ has weight $0$.  If $\+G^{(i)}_j$ has weight $s^{(i)}_{j}$ for $j\in[a_i]$, then the output multiset $\+G_i$ has weight
\begin{align}\label{eq:error-weight-recursion}
  s_i=\phi_i\left(s^{(i)}_1,\ldots,s^{(i)}_{a_i}\right)+1. 
\end{align}
Let $S=s_N$ be the final error weight of $\+G_N = \+G^{\mathrm{root}}$.

\begin{algorithm}[ht]
  \caption{Abstract FPTAS for $\^P$ and $\^Q$ satisfying \Cref{assumption:admissible-gate-circuit}}
  \label{alg:abstract-FPTAS}
  \SetKwInOut{Input}{Input}
  \SetKwInOut{Output}{Output}
  \Input{Description of $\^P$ and $\^Q$, $\gamma\in \^R^K$ as in \Cref{def:tv-realization}, $\epsilon \in (0,1)$, and the gate sequence from \Cref{assumption:admissible-gate-circuit}}
  \Output{An approximation $\hat{d}$ of $\DTV{\^P}{\^Q}$ such that $\hat{d} \le d_{\mathrm{TV}}(\^P,\^Q) \le (1+\epsilon)\hat{d}$}
  Construct the leaf family $\+L=\{\+G^{\mathrm{leaf}}_1,\ldots,\+G^{\mathrm{leaf}}_\ell\}$, let $\underline{\+G}^{\mathrm{leaf}}_j = \+G^{\mathrm{leaf}}_j$ for all $j\in[\ell]$\;
  Recursively compute the error weights $s_1,\ldots,s_N$ for the gate sequence using \eqref{eq:error-weight-recursion}\;
  Set $S \gets s_N$, and $\eta\gets (1+\epsilon)^{1/S}-1$\;
  \For{$i=1,\ldots,N$}{
    Compute $\widetilde{\+G}_i \gets G_i(\underline{\+G}^{(i)}_1,\ldots,\underline{\+G}^{(i)}_{a_i})$\label{line:gate-application}\;
    Compress $\widetilde{\+G}_i$ using \Cref{lem:compression} with tolerance $\eta$, and store the resulting multiset as $\underline{\+G}_i$\;
  }
  \Return{$\hat{d}=\frac12\sigma_{\+Z(\underline{\+G}_N)}(\gamma)$}\;
\end{algorithm}

\begin{theorem}
\label{thm:abstract-fptas}
Let $\^P$ and $\^Q$ be probability distributions satisfying \Cref{assumption:admissible-gate-circuit}, vector $\gamma \in \^R^K$ as in \Cref{def:tv-realization}, and $\epsilon \in (0,1)$. Then Algorithm \ref{alg:abstract-FPTAS} outputs an approximation $\hat{d}$ of $\DTV{\^P}{\^Q}$ such that
$$ \hat{d} \le d_{\mathrm{TV}}(\^P,\^Q) \le (1+\epsilon)\hat{d}. $$
Moreover, if $M_{G_i} \defeq |\widetilde{\+G}_i|$ denotes the output size of gate $G_i$ in Line \ref{line:gate-application} of Algorithm \ref{alg:abstract-FPTAS}, and $T_{G_i}(\mu_i)$ is the corresponding gate-evaluation time on inputs and output of total size $\mu_i$, then the total running time is
$$ O_K\left(T_0+\sum_{i=1}^N\left(T_{G_i}(\mu_i)+M_{G_i}\log\log(M_{G_i}+2)+M_{G_i}(S/\epsilon)^K\right)\right). $$
\end{theorem}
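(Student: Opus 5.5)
The plan is to prove correctness by induction over the gate sequence and then bound the cost step by step. The invariant is that for every multiset produced by the algorithm (leaf or gate output), with error weight $s$ as defined by \eqref{eq:error-weight-recursion}, the stored approximation satisfies
$$\+Z(\underline{\+G})\subseteq\+Z(\+G)\subseteq(1+\eta)^{s}\+Z(\underline{\+G}),$$
where $\+G$ is the corresponding exact multiset (leaf or $\+G_i$). For leaves this holds trivially with $s=0$, since $\underline{\+G}^{\mathrm{leaf}}_j=\+G^{\mathrm{leaf}}_j$.

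For the inductive step at gate $i$, every input $\+G^{(i)}_j$ is a leaf or some $\+G_{j'}$ with $j'<i$, so the hypothesis gives the containments with weights $s^{(i)}_j$. Admissibility of $G_i$ (\Cref{def:admissible-gate}) then yields
$$\+Z(\widetilde{\+G}_i)\subseteq\+Z(\+G_i)\subseteq(1+\eta)^{\phi_i(s^{(i)}_1,\ldots,s^{(i)}_{a_i})}\+Z(\widetilde{\+G}_i),$$
and we need $\eta\in(0,1)$ here. \Cref{lem:compression} gives $\+Z(\underline{\+G}_i)\subseteq\+Z(\widetilde{\+G}_i)\subseteq(1+\eta)\+Z(\underline{\+G}_i)$. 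Chaining these, using that scaling by a nonnegative factor preserves inclusion, gives the invariant with weight $\phi_i(\cdot)+1=s_i$.

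At the root, $\+G_N=\+G^{\mathrm{root}}$ and $(1+\eta)^S=1+\epsilon$. Monotonicity and positive homogeneity of support functions, together with \Cref{lem:tv-support}, then give $\hat d\le\DTV{\^P}{\^Q}\le(1+\epsilon)\hat d$.

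The step I expect to need care is checking that $\eta\in(0,1)$ and that $S\ge1$. The recursion adds $1$ at each gate, so $S\ge1$, and hence $0<\eta=(1+\epsilon)^{1/S}-1\le\epsilon<1$. The cost of compression must also be expressed through $S/\epsilon$. Since $1+\epsilon\le e^{\epsilon}$ and $(1+\epsilon)^{1/S}\ge1+\frac{\ln(1+\epsilon)}{S}$ with $\ln(1+\epsilon)\ge\epsilon/2$ on $(0,1)$, we get $\eta\ge\epsilon/(2S)$. Therefore $\eta^{-K}\le(2S/\epsilon)^K=O_K((S/\epsilon)^K)$.

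For the running time, three costs add up. Building $\gamma$ and $\+L$ costs $T_0$. Computing the error weights costs time absorbed into the gate bounds, by the stated assumption that each $\phi_i$ is computable within $T_{G_i}(\mu_i)$. Each loop iteration costs $T_{G_i}(\mu_i)$ for the gate and, by \Cref{lem:compression} applied to $m=M_{G_i}$ generators, $O_K(M_{G_i}\log\log(M_{G_i}+2)+M_{G_i}\eta^{-K})$ for compression. The final evaluation $\frac12\sum_{g\in\underline{\+G}_N}|\inner{\gamma}{g}|$, via \eqref{eq:support-zonotope}, costs $O_K(|\underline{\+G}_N|)\le O_K(M_{G_N})$ if $M_{G_N}\ge1$, and is otherwise trivial. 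Substituting the bound on $\eta^{-K}$ and summing over $i$ gives the claimed total.
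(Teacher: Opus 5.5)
Your proof is correct and follows the paper's argument: the same invariant $\+Z(\underline{\+G}_i)\subseteq\+Z(\+G_i)\subseteq(1+\eta)^{s_i}\+Z(\underline{\+G}_i)$, proved by induction through admissibility plus one compression, then monotonicity of support functions and \Cref{lem:tv-support} at the root, with the compression cost of \Cref{lem:compression} summed over the gates. Your explicit checks that $0<\eta\le\epsilon<1$ and $\eta^{-1}\le 2S/\epsilon$ fill in steps the paper leaves implicit. One small slip: the compressed output of \Cref{lem:compression} has $|\+N_\delta|=O_K(\eta^{-K})$ entries, zeros included, so the final support evaluation is not bounded by $M_{G_N}$ but is absorbed by the $M_{G_N}(S/\epsilon)^K$ term, which changes nothing.
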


\begin{proof}
We prove by induction on $i$ that after the $i$-th iteration of Algorithm \ref{alg:abstract-FPTAS},
\begin{align}\label{eq:invariant}
  \+Z(\underline{\+G}_i)\subseteq \+Z(\+G_i)\subseteq(1+\eta)^{s_i}\+Z(\underline{\+G}_i),
\end{align}
where $\+G_i$ is the exact output of $G_i$ in \Cref{assumption:admissible-gate-circuit}.

For every leaf multiset, the algorithm stores it exactly, so \eqref{eq:invariant} holds with weight $0$.  Suppose \eqref{eq:invariant} holds for all $h<i$.  In step $i$, each input $\underline{\+G}^{(i)}_j$ is either a leaf, in which case the stored approximation is exact, or an earlier stored approximation output $\underline{\+G}_h$ in step $h$. Thus the stored inputs satisfy the hypotheses of admissibility with weights $s^{(i)}_1,\ldots,s^{(i)}_{a_i}$.  Therefore by \Cref{def:admissible-gate}, the pre-compression output $\widetilde{\+G}_i$ satisfies
$$ \+Z(\widetilde{\+G}_i)\subseteq \+Z(\+G_i)\subseteq(1+\eta)^{\phi_i(s^{(i)}_1,\ldots,s^{(i)}_{a_i})}\+Z(\widetilde{\+G}_i). $$
The compression step gives
$$ \+Z(\underline{\+G}_i)\subseteq \+Z(\widetilde{\+G}_i)\subseteq(1+\eta)\+Z(\underline{\+G}_i). $$
Combining the two containments and using \eqref{eq:error-weight-recursion} proves the induction claim.

At $i=N$, $\+G_N=\+G^{\mathrm{root}}$.  Hence
$$ \+Z(\underline{\+G}_N)\subseteq \+Z(\+G^{\mathrm{root}})\subseteq(1+\eta)^S \+Z(\underline{\+G}_N)=(1+\epsilon)\+Z(\underline{\+G}_N). $$
Since support functions are monotone under set inclusion, the containment above implies
$$
\sigma_{\+Z(\underline{\+G}_N)}(\gamma)
\le \sigma_{\+Z(\+G^{\mathrm{root}})}(\gamma)
\le (1+\epsilon)\sigma_{\+Z(\underline{\+G}_N)}(\gamma).
$$
Together with \Cref{lem:tv-support}, this gives
$$ \hat{d}=\frac12\sigma_{\+Z(\underline{\+G}_N)}(\gamma)\le d_{\mathrm{TV}}(\^P,\^Q)\le(1+\epsilon)\hat{d}. $$

It remains to bound the running time.  Constructing $\+L$ and computing $\gamma$ costs $T_0$.  At step $i$, Line \ref{line:gate-application} costs $T_{G_i}(\mu_i)$ and produces $M_{G_i}$ generators, where $\mu_i$ is the total size of the inputs and output of $G_i$.  By \Cref{lem:compression}, compressing this multiset with tolerance $\eta$ costs
$$ O_K(M_{G_i}\log\log(M_{G_i}+2)+M_{G_i}\eta^{-K})
=O_K(M_{G_i}\log\log(M_{G_i}+2)+M_{G_i}(S/\epsilon)^K). $$
Summing over $i=1,\ldots,N$ gives the stated running-time bound.
\end{proof}

\section{Applications}

The preceding section gives an FPTAS for any pair of distributions whose TV-realization admits a short admissible-gate sequence with efficient compression.  We now instantiate this abstract framework for several concrete distribution classes by specifying the interface vector, the leaf generator multisets, and the required admissible gates.

\subsection{Mixture of Product Distributions}\label{sec:mix-of-prod}
Let $[q]=\{1,\ldots,q\}$ and $\Omega=[q]^n$. For constants $k_1,k_2\geq1$, let $\^P$ and $\^Q$ be mixtures of product distributions on $\Omega$ with $k_1$ and $k_2$ components, where $$\^P = \sum_{s=1}^{k_1} \alpha^{(s)} \bigotimes_{i=1}^n \^P^{(s)}_i \text{ and } \^Q = \sum_{t=1}^{k_2} \beta^{(t)} \bigotimes_{i=1}^n \^Q^{(t)}_i.$$ 
Set $K = k_1+k_2$, $\gamma = (\alpha^{(1)}, \ldots, \alpha^{(k_1)}, -\beta^{(1)}, \ldots, -\beta^{(k_2)})$, and $\+G^{\mathrm{root}} = \{g(x):x\in\Omega\}$, where
$$g(x) = \left(\^P^{(1)}(x), \ldots, \^P^{(k_1)}(x), \^Q^{(1)}(x), \ldots, \^Q^{(k_2)}(x)\right)\quad \forall x\in\Omega.$$

We now substitute mixtures of product distributions into the abstract framework. To verify \Cref{assumption:admissible-gate-circuit}, we first introduce the diagonal-sum gate that will be used to generate the root multiset.

\begin{definition}[Diagonal-sum gate]\label{def:diag-sum-gate}
  Given a set $D$ of diagonal linear maps $D_1,\ldots,D_q:\^R^K\to\^R^K$, define the arity-$q$ gate $G_D$ as follows: it applies $D_v$ to every generator in the $v$-th input multiset, and then takes the multiset union over all $v\in[q]$:
  $$ G_D(\mathcal G_1,\ldots,\mathcal G_q)=\biguplus_{v\in[q]}D_v\+G_v=\biguplus_{v\in[q]}\left\{D_vg:g\in\mathcal G_v\right\}. $$
\end{definition}

For product distributions, revealing one coordinate simply multiplies each
component probability by the corresponding one-coordinate marginal.  Thus each
possible symbol $v\in[q]$ applies a diagonal scaling to the current generator
multiset, and the next prefix multiset is the union over all choices of $v$.
The diagonal-sum gate abstracts exactly this operation.
  
\begin{lemma}\label{lem:admissible-diag-sum}
  The gate $G_D$ is admissible with error map $\phi_D(s_1,\ldots,s_q)=\max_{v\in[q]}s_v$. Moreover, its evaluation time is $T_{G_D}(\mu)=O_K(\mu)$, where $\mu$ is the total size of the input and output multisets; in particular, if every input $\+G_v$ has size at most $M$, then $T_{G_D}(2qM)=O_K(qM)$.
\end{lemma}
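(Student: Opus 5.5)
The plan is to reduce the admissibility claim to two facts about represented zonotopes: linear images and multiset unions interact with $\+Z(\cdot)$ exactly as linear maps and Minkowski sums. First I would record the identities
$$\+Z(D\+G)=D\,\+Z(\+G),\qquad \+Z(\+G\uplus\+G')=\+Z(\+G)+\+Z(\+G'),$$
both immediate from the definition $\+Z(\+G)=\sum_{g\in\+G}[-g,g]$, since $D[-g,g]=[-Dg,Dg]$ and Minkowski sums of segments are associative. Combining them gives
$$\+Z\bigl(G_D(\+G_1,\ldots,\+G_q)\bigr)=\sum_{v\in[q]}D_v\,\+Z(\+G_v).$$
This holds for any linear maps; diagonality is not needed for this part.

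Next I would verify the containments. Assume $\+Z(\underline{\+G}_v)\subseteq\+Z(\+G_v)\subseteq(1+\eta)^{s_v}\+Z(\underline{\+G}_v)$ for each $v$, and let $s=\max_v s_v$.

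The lower containment follows because both linear images and Minkowski sums are monotone under inclusion, so $\sum_v D_v\+Z(\underline{\+G}_v)\subseteq\sum_v D_v\+Z(\+G_v)$.

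For the upper containment, the one point needing care is the step $(1+\eta)^{s_v}\+Z(\underline{\+G}_v)\subseteq(1+\eta)^{s}\+Z(\underline{\+G}_v)$. This holds because $\+Z(\underline{\+G}_v)$ is convex and contains $0$ (it is centrally symmetric), so $\lambda C\subseteq\lambda' C$ whenever $0\le\lambda\le\lambda'$. Then linearity gives $D_v(\lambda C)=\lambda D_vC$, and scalars factor out of Minkowski sums, $\sum_v\lambda C_v=\lambda\sum_v C_v$. Together these yield $\+Z\subseteq(1+\eta)^{s}\underline{\+Z}$ with $\phi_D=\max_v s_v$.

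Finally I would bound the running time. Applying a diagonal $K\times K$ map to one generator costs $O(K)$, and the union is just concatenation of the lists. The total cost is therefore $O_K(\sum_v|\+G_v|)=O_K(\mu)$. When each input has size at most $M$, the output has size at most $qM$, so $\mu\le 2qM$ and the time is $O_K(qM)$.

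I expect no real obstacle. The only subtle step is the scaling monotonicity $\lambda C\subseteq\lambda' C$, which uses that $0\in C$ and $C$ is convex.
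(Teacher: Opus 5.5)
Your proposal is correct and follows essentially the same route as the paper: you identify the output zonotope as $\sum_{v\in[q]}D_v\,\mathcal Z(\mathcal G_v)$, get both containments from monotonicity of linear images and Minkowski sums, and charge $O_K(1)$ per generator for the running time. Your explicit justification of $(1+\eta)^{s_v}C\subseteq(1+\eta)^{s}C$, using convexity and $0\in C$, fills in a step the paper uses without comment.
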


\begin{proof}
  By definition of $G_D$, the represented zonotope of the output is
  $$ \+Z\left(G_D(\mathcal G_1,\ldots,\mathcal G_q)\right)=\sum_{v\in[q]}\+Z(D_v\mathcal G_v) = \sum_{v\in[q]}D_v\+Z(\mathcal G_v), $$
  here we use Minkowski sum for summation between zonotopes.

  For every $v \in [q]$, suppose there exists $\underline{\+G}_v$ such that
  $$ \+Z(\underline{\mathcal G}_v)\subseteq \+Z(\mathcal G_v)\subseteq(1+\eta)^{s_v}\+Z(\underline{\mathcal G}_v),$$
  then $$\+Z(G_D(\underline{\mathcal G}_1,\ldots,\underline{\mathcal G}_q)) = \sum_{v\in[q]}D_v\+Z(\underline{\+G}_v) \subseteq \sum_{v\in[q]}D_v\+Z(\+G_v) = \+Z\left(G_D(\mathcal G_1,\ldots,\mathcal G_q)\right).$$
  For the other direction, since for any $v\in[q]$, $\phi_D(s_1,\ldots,s_q)\geq s_v$, thus $$\+Z(\mathcal G_v)\subseteq (1+\eta)^{s_v}\+Z(\underline{\mathcal G}_v)\subseteq(1+\eta)^{\phi_D(s_1,\ldots,s_q)}\+Z(\underline{\mathcal G}_v),$$
  which implies
  $$\sum_{v\in[q]}D_v\+Z(\+G_v) \subseteq (1+\eta)^{\phi_D(s_1,\ldots,s_q)}\sum_{v\in[q]}D_v\+Z(\underline{\+G}_v).$$

  Each of the $\sum_{v\in[q]}|\+G_v|$ input generators produces one output generator through a single multiplication $D_vg$, which takes $O_K(1)$ time for constant $K$.  Hence the total size satisfies $\mu=\Theta\big(\sum_{v\in[q]}|\+G_v|\big)$ and $T_{G_D}(\mu)=O_K(\mu)$; in particular, if every input has size at most $M$, then $\mu=O(qM)$ and $T_{G_D}(\mu)=O_K(qM)$.
\end{proof}

\begin{proof}[Proof of~\Cref{thm:mix-of-prod}]
  We verify \Cref{assumption:admissible-gate-circuit} and record the parameters in \Cref{thm:abstract-fptas}.  For $a\in[K]$ and $i\in[n]$, define the merged one-coordinate marginals by
  $$
  R_{a,i}=
  \begin{cases}
    \^P_i^{(a)}, & a\le k_1,\\
    \^Q_i^{(a-k_1)}, & a>k_1.
  \end{cases}
  $$
  Thus $R_{a,i}$ is the marginal distribution on the $i$-th coordinate.  For each $i\in[n]$, let $D_i=(D_{i,1},\ldots,D_{i,q})$, where
  $$
  D_{i,v}=\operatorname{diag}\left(R_{1,i}(v),\ldots,R_{K,i}(v)\right).
  $$

  The leaf family consists of one multiset
  $$
  \+L=\{\+H_0\},\quad \text{where } \+H_0=\{(1,1,\ldots,1)\}\subseteq\mathbb R^K.
  $$
  Hence $T_0=O_K(1)$.  For $i=1,\ldots,n$, apply the diagonal-sum gate $G_{D_i}$ to $q$ copies of the previously constructed multiset:
  $$
  \+H_i=G_{D_i}(\+H_{i-1},\ldots,\+H_{i-1}).
  $$
  This gate sequence has length
  $N=n$,
  and its final output is $\+G_N=\+H_n$.

  By induction on $i$, $\+H_i$ is exactly the generator multiset
  $$
  \left\{\left(\prod_{j=1}^i R_{a,j}(x_j)\right)_{a=1}^K:(x_1,\ldots,x_i)\in[q]^i\right\}.
  $$
  The base case is the empty prefix $\+H_0=\{(1,\ldots,1)\}$.  For the induction step, applying $G_{D_i}$ chooses the next symbol $v=x_i$ and multiplies the $a$-th coordinate by $R_{a,i}(v)$, so a generator corresponding to $x_1,\ldots,x_{i-1}$ is sent to one corresponding to $x_1,\ldots,x_i$:
  $$
  R_{a,i}(v)\prod_{j=1}^{i-1} R_{a,j}(x_j)
  =
  \prod_{j=1}^i R_{a,j}(x_j),
  $$
  and every prefix is obtained in this way.
  Therefore the final root output is
  $$
  \+H_n=\left\{\left(\prod_{j=1}^n R_{a,j}(x_j)\right)_{a=1}^K:x\in[q]^n\right\}=\+G^{\mathrm{root}}.
  $$
  Thus the final gate sequence realizes the root generator multiset in \Cref{def:tv-realization} with vector $\gamma$.

  By the preceding lemma, every gate in this sequence is admissible with error map
  $$\phi_D(s_1,\ldots,s_q)=\max_{v\in[q]}s_v.$$
  The recursive procedure of computing error weights gives $s_i = i$, and thus $S=s_n=n$.

  It remains to substitute the parameters into the running-time bound of \Cref{thm:abstract-fptas}.  The tolerance used by Algorithm \ref{alg:abstract-FPTAS} is
  $$
  \eta=(1+\varepsilon)^{1/S}-1=(1+\varepsilon)^{1/n}-1,
  $$
  so $\eta^{-1}=O(n/\varepsilon)$.  By \Cref{lem:compression}, each stored compressed multiset has size
  $$
  M=O_K(\eta^{-K})=O_K((n/\varepsilon)^K).
  $$
  Each diagonal-sum gate has $q$ inputs of size at most $M$, so for every gate in the sequence,
  $$
  M_{G_D}\le qM, \qquad \mu_i = O(qM), \qquad T_{G_D}(\mu_i)=O_K(qM).
  $$
  Therefore \Cref{thm:abstract-fptas} gives total running time
  \begin{equation*}
    O_K\left(1+n\left(qM+qM\log\log(qM)+qM(n/\varepsilon)^K\right)\right)
    =\widetilde O_K\left(nq(n/\varepsilon)^{2K}\right).\qedhere
  \end{equation*}
\end{proof}

\subsection{Mixture of Markov Chains}\label{sec:mix-of-mc}

Let $\Omega=[q]^n$.  Fix constants $k_1,k_2\geq1$.  For each $s\in[k_1]$, let $\^P^{(s)}$ be a Markov-chain distribution on $\Omega$ with initial distribution $\^P_1^{(s)}$ over $[q]$, and transition matrices $\^P^{(s)}_{2|1},\ldots,\^P^{(s)}_{n|n-1}$, so that for any $x\in[q]^n$,
$$\^P(x)=\sum_{s=1}^{k_1}\alpha^{(s)}\^P^{(s)}(x) =\sum_{s=1}^{k_1}\alpha^{(s)} \^P_1^{(s)}(x_1)\prod_{i=2}^n \^P^{(s)}_{i|i-1}(x_i|x_{i-1}). $$
Similarly, 
$$\^Q(x)=\sum_{t=1}^{k_2}\beta^{(t)}\^Q^{(t)}(x) =\sum_{t=1}^{k_2}\beta^{(t)} \^Q_1^{(t)}(x_1)\prod_{i=2}^n \^Q^{(t)}_{i|i-1}(x_i|x_{i-1}).$$

Write $\gamma = (\alpha^{(1)}, \ldots, \alpha^{(k_1)}, -\beta^{(1)}, \ldots, -\beta^{(k_2)})$, $\+G^{\mathrm{root}} = \{g(x): x\in\Omega\}$ where 
$$g(x) = \left(\^P^{(1)}(x), \ldots, \^P^{(k_1)}(x), \^Q^{(1)}(x), \ldots, \^Q^{(k_2)}(x)\right), \quad \forall x \in \Omega.$$
Then $\^P$ and $\^Q$ satisfy \Cref{def:tv-realization} with vector $\gamma$ and multiset $\+G^{\mathrm{root}}$. 

We now extend the preceding product-distribution construction to mixtures of Markov chains. We still use the diagonal-sum gate as defined in \Cref{def:diag-sum-gate}. In the product case, one multiset suffices at each level because the coordinate marginals are independent of the past. For Markov chains, the transition at time $i$ depends on the previous state $x_{i-1}$. We therefore build the generators backward, from suffixes to the full sequence: at each level $i$, we keep $q$ multisets indexed by the possible value $u\in[q]$ of $x_{i-1}$. The multiset indexed by $u$ represents suffix generators conditioned on $x_{i-1}=u$.

\begin{proof}[Proof of \Cref{thm:mix-of-mc}]
  For $a\in[K]$, define the merged kernels by
  $$
  R_{a,1}=
  \begin{cases}
    \^P_1^{(a)}, & a\le k_1,\\
    \^Q_1^{(a-k_1)}, & a>k_1,
  \end{cases}
  \qquad
  R_{a,i}=
  \begin{cases}
    \^P^{(a)}_{i|i-1}, & a\le k_1,\\
    \^Q^{(a-k_1)}_{i|i-1}, & a>k_1,
  \end{cases}
  \text{ for } i=2,\ldots,n.
  $$
  Thus $R_{a,1}$ is a distribution on the first coordinate, while $R_{a,i}$ is a transition kernel for $i\ge2$.  For $i=2,\ldots,n$ and $u\in[q]$, let $D_{i,u}=(D_{i,u,1},\ldots,D_{i,u,q})$, where
  $$
  D_{i,u,v}=\operatorname{diag}\left(R_{1,i}(v|u),\ldots,R_{K,i}(v|u)\right),
  $$
  and let $D_1=(D_{1,1},\ldots,D_{1,q})$, where
  $$
  D_{1,v}=\operatorname{diag}\left(R_{1,1}(v),\ldots,R_{K,1}(v)\right).
  $$
  The matrix $D_{i,u,v}$ attaches one transition from boundary state $u$ to the next state $v$ simultaneously in all mixture coordinates. Thus the same diagonal-sum gate suffices; the only change from the product case is that the current multiset is indexed by the boundary state, and there are $q$ of them in each layer.

  The leaf family is
  $$
  \+L=\{\+H_{n+1}^u:u\in[q]\},\quad \text{where } \+H_{n+1}^u=\{(1,1,\ldots,1)\}\subseteq\mathbb R^K.
  $$
  Hence the initialization time $T_0=O_K(q)$.  For $i=n,n-1,\ldots,2$ and $u\in[q]$, apply the diagonal-sum gate $G_{D_{i,u}}$ to the input multisets indexed by the next state.  More explicitly, if $\+H_{i+1}^v$ has already been constructed for every $v\in[q]$, set
  $$
  \+H_i^u=G_{D_{i,u}}(\+H_{i+1}^1,\ldots,\+H_{i+1}^q).
  $$
  Finally, apply one more diagonal-sum gate $G_{D_1}$ to $\+H_2^1,\ldots,\+H_2^q$ and call the output $\+H_1$.  This gate sequence has length
  $$N=q(n-1)+1.$$

  By backward induction on $i$, we show that $\+H_i^u$ is exactly the generator multiset
  $$
  \left\{\left(\prod_{j=i}^n R_{a,j}(x_j|x_{j-1})\right)_{a=1}^K:(x_i,\ldots,x_n)\in[q]^{n-i+1}, x_{i-1}=u\right\}.
  $$
  That is, condition on $x_{i-1} = u$, each generator records the componentwise conditional probability of one possible suffix $x_i,\ldots,x_n$; its $a$-th coordinate is the probability of generating that suffix under component $a$.
  
  For $i=n+1$, this is the empty suffix, represented by $\+H_{n+1}^u=\{(1,\ldots,1)\}$.  Suppose the claim holds for $\+H_{i+1}^v$ for every $v\in[q]$.  Then $\+H_i^u$ is the multiset union of $D_{i,u,v}g$ over $v\in[q]$ and $g\in\+H_{i+1}^v$.  If $g$ corresponds to $x_{i+1},\ldots,x_n$ with boundary $x_i=v$, then its new $a$-th coordinate is
  $$
  R_{a,i}(v|u)\prod_{j=i+1}^n R_{a,j}(x_j|x_{j-1})
  =
  \prod_{j=i}^n R_{a,j}(x_j|x_{j-1}),
  $$
  after setting $x_i=v$.  Conversely, every suffix with $x_{i-1}=u$ is obtained by choosing $v=x_i$ and then the tail $x_{i+1},\ldots,x_n$.  Thus the description holds at level $i$.
  Therefore the final root output is
  $$
  \+H_1=\left\{\left(R_{a,1}(x_1)\prod_{j=2}^n R_{a,j}(x_j|x_{j-1})\right)_{a=1}^K:x\in[q]^n\right\}=\+G^{\mathrm{root}}.
  $$
  Thus the final gate sequence realizes the root generator multiset in \Cref{def:tv-realization} with vector $\gamma$.

  The rest of the parameter calculation is the same as in the product case. By \Cref{lem:admissible-diag-sum}, all gates are admissible with error map $\phi_D(s_1,\ldots,s_q)=\max_{v\in[q]}s_v$, and the final error weight is $S=n$.  Thus \Cref{thm:abstract-fptas} gives
  $$
  \hat{d}\leq \DTV{\^P}{\^Q}\leq(1+\varepsilon)\hat{d}.
  $$
  With $\eta=(1+\varepsilon)^{1/n}-1$, each stored compressed multiset has size
  $$
  M=O_K((n/\varepsilon)^K).
  $$
  Since there are $N=q(n-1)+1=O(nq)$ diagonal-sum gates, and each gate has total input-output size $O(qM)$, the evaluation time is $O_K(qM)$, substituting these parameters into \Cref{thm:abstract-fptas} gives the running time
  \begin{equation*}
    \widetilde O_K\left(nq^2(n/\varepsilon)^{2K}\right).\qedhere
  \end{equation*}
\end{proof}

\subsection{Latent-Tree Ising Models}\label{sec:latent-tree}

Let $T=(V,E)$ be an undirected tree with leaf set $L\subseteq V$.  We consider two Ising models $(T,J^{\^P},h^{\^P})$ and $(T,J^{\^Q},h^{\^Q})$, specifying two Gibbs distributions $\^P$ and $\^Q$ over $\{\pm1\}^V$.  For $(T,J^{\^P},h^{\^P})$, the full spin distribution is
$$\^P(X_V=x_V) = \frac{1}{Z_{\^P}} \exp\left( \sum_{\{u,v\}\in E}J^{\^P}_{uv}x_u x_v + \sum_{u\in V}h^{\^P}_u x_u \right), \quad x_V\in\{\pm1\}^V.$$
Let $\^P_L$ be the marginal distribution of $\^P$ over $L$, i.e.,
$$\^P_L(x_L) = \sum_{\substack{y\in\{\pm1\}^V:\\ y_L=x_L}} \^P(X_V=y), \quad \forall x_L \in \{\pm1\}^L.$$
$\^Q$ and $\^Q_L$ are defined analogously. Our goal is to approximate $\DTV{\^P_L}{\^Q_L}$.

The main difficulty is that the leaf marginal is obtained after summing over
all hidden spins.  We avoid normalizing at every subtree.  Instead, for each
subtree we keep the two unnormalized contributions corresponding to the spin
of its root.  The global normalization will be inserted only once, at the root.

For $|V|\le2$, compute the TV-distance exactly by enumeration; otherwise, fix an arbitrary internal root $r$ of $T$ and orient all edges away from $r$. Denote the subtree rooted at $u$ by $T_u = (V_u, E_u)$, and let $L_u = L \cap V_u$. For any $x_{L_u}\in\{\pm1\}^{L_u}$, define the restricted subtree partition function
\begin{equation}\label{eq:conditional-partition-function}
  Z^u_R(s,x_{L_u})=\sum_{\substack{y\in\{\pm1\}^{V_u}:\\y_u=s,\ y_{L_u}=x_{L_u}}}\exp\left(\sum_{\{i,j\}\in E_u}J^R_{ij}y_i y_j+\sum_{i\in V_u}h^R_i y_i
\right),\forall s\in\{\pm1\},R\in\{\^P,\^Q\}.\end{equation}

This suggests a four-dimensional interface.  For each boundary spin
$s\in\{\pm1\}$ we keep the unnormalized contribution under both models
$\^P$ and $\^Q$.  Thus the coordinates record exactly the information needed
to later compare the two leaf marginals.
Set $K = 4$, and index the four coordinates by pairs $(R,s)\in\{\^P,\^Q\}\times\{\pm1\}$, with coordinate order $(\^P,+1),(\^P,-1),(\^Q,+1),(\^Q,-1)$, define
\begin{align}\label{eq:latent-g}
  g_u(x_{L_u})
  =
  \left(
  Z^u_{\^P}(+1,x_{L_u}),
  Z^u_{\^P}(-1,x_{L_u}),
  Z^u_{\^Q}(+1,x_{L_u}),
  Z^u_{\^Q}(-1,x_{L_u})
  \right)\in\^R^4.
\end{align}

Let $Z_{\^P}$ and $Z_{\^Q}$ be the full partition functions of the two Ising models, and set
$$
\gamma=\left(\frac1{Z_{\^P}},\frac1{Z_{\^P}},-\frac1{Z_{\^Q}},-\frac1{Z_{\^Q}}\right).
$$
Let $\+G^{\mathrm{root}}=\{g_r(x_L):x_L\in\{\pm1\}^L\}$, then
$$
\inner{\gamma}{g_r(x_L)}
=
\frac{Z^r_{\^P}(+1,x_L)+Z^r_{\^P}(-1,x_L)}{Z_{\^P}}
-
\frac{Z^r_{\^Q}(+1,x_L)+Z^r_{\^Q}(-1,x_L)}{Z_{\^Q}}
=\^P_L(x_L)-\^Q_L(x_L).
$$
Thus $\+G^{\mathrm{root}}$ satisfies \Cref{def:tv-realization} with vector $\gamma$.

The gate sequence describes the bottom-up partition-function dynamic program. The initial leaf family is
$\+L=\{\+G_u^{(0)}:u\in V\}$ defined as follows:
For an unobserved internal vertex $u\notin L$, set
$$
\+G_u^{(0)}
=
\left\{
\left(
e^{h_u^{\^P}},
e^{-h_u^{\^P}},
e^{h_u^{\^Q}},
e^{-h_u^{\^Q}}
\right)
\right\}.
$$
For a leaf $\ell\in L$, set
\begin{align}\label{eq:G-leaf}
\+G_\ell^{(0)}
=
\left\{
\left(
e^{h_\ell^{\^P}}\mathbf 1_{\{x=+1\}},
e^{-h_\ell^{\^P}}\mathbf 1_{\{x=-1\}},
e^{h_\ell^{\^Q}}\mathbf 1_{\{x=+1\}},
e^{-h_\ell^{\^Q}}\mathbf 1_{\{x=-1\}}
\right)
:x\in\{\pm1\}
\right\}.
\end{align}
Thus $\+G_u^{(0)}$ contains the local vertex field, and the edge interactions will be inserted by the gates.

\begin{definition}[Edge-lift gate]
Let $B:\^R^4\to\^R^4$ be a fixed linear map. For multisets $\+G,\+H\subseteq\^R^4$, define the edge-lift gate
$$G_B(\+G,\+H)=\{g\odot Bh:g\in\+G,\ h\in\+H\}.$$
\end{definition}

It remains to express the usual tree recursion as admissible gates.  When a
child subtree is attached to its parent, the edge interaction first transforms
the child's two spin-conditioned contributions by summing over the child spin.
The resulting vector is then multiplied coordinatewise with the contribution
already accumulated at the parent.  This is exactly the operation captured by
the above edge-lift gate.

\begin{lemma}\label{lem:admissible-edge-lift}
The gate $G_B$ is admissible with error map $\phi_B(s_{\+G},s_{\+H})=s_{\+G}+s_{\+H}$, and its evaluation time is $T_{G_B}(\mu)=O(\mu)$, where $\mu$ is the total size of the input and output multisets.  In particular, if $|\+G|\le M_{\+G}$ and $|\+H|\le M_{\+H}$, then the gate $G_B$ has output size $M_{G_B}=|G_B(\+G,\+H)|\le M_{\+G}M_{\+H}$ and evaluation time $T_{G_B}(M_{\+G}+M_{\+H}+M_{\+G}M_{\+H})=O(M_{\+G}M_{\+H})$.
\end{lemma}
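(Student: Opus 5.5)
The plan is to use that the edge-lift gate is \emph{bilinear} in its two arguments. Once one argument is fixed, the gate is a union of linear images of the other argument. So we can replace the two inputs one at a time, using only the two facts already used in \Cref{lem:admissible-diag-sum}: $\+Z(F\+G)=F\+Z(\+G)$ for a linear map $F$, and the zonotope of a multiset union is the Minkowski sum of the zonotopes.

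First I record the two decompositions. For $g,h\in\^R^4$ we have $g\odot Bh=\operatorname{diag}(Bh)\,g=\operatorname{diag}(g)B\,h$. Hence
$$\+Z(G_B(\+G,\+H))=\sum_{h\in\+H}\operatorname{diag}(Bh)\,\+Z(\+G)=\sum_{g\in\+G}\operatorname{diag}(g)B\,\+Z(\+H),$$
where the sums are Minkowski sums. In each expression, the right-hand side is monotone under inclusion in the zonotope that appears. It is also positively homogeneous, because linear maps commute with scaling and $\sum_h F_h(\lambda C)=\lambda\sum_h F_h C$.

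Next I pass through the intermediate multiset $G_B(\underline{\+G},\+H)$. Assume $\+Z(\underline{\+G})\subseteq\+Z(\+G)\subseteq(1+\eta)^{s_{\+G}}\+Z(\underline{\+G})$ and the analogous containment for $\+H$. The first decomposition (with $\+H$ fixed) gives
$$\+Z(G_B(\underline{\+G},\+H))\subseteq\+Z(G_B(\+G,\+H))\subseteq(1+\eta)^{s_{\+G}}\+Z(G_B(\underline{\+G},\+H)).$$
The second decomposition, now with the multiset $\underline{\+G}$ held fixed and $\+H$ replaced by $\underline{\+H}$, gives
$$\+Z(G_B(\underline{\+G},\underline{\+H}))\subseteq\+Z(G_B(\underline{\+G},\+H))\subseteq(1+\eta)^{s_{\+H}}\+Z(G_B(\underline{\+G},\underline{\+H})).$$
Chaining the two, and scaling the second chain by $(1+\eta)^{s_{\+G}}$, yields $\underline{\+Z}\subseteq\+Z\subseteq(1+\eta)^{s_{\+G}+s_{\+H}}\underline{\+Z}$. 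This is exactly the error map $\phi_B$.

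The only conceptual obstacle is that the Hadamard product of two zonotopes is not itself a linear image. The joint replacement therefore cannot be done in one step, and the point above is to freeze one factor at a time. Note that the frozen factor must be the \emph{approximate} $\underline{\+G}$ in the second step, so that it matches the final output.

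Finally, the running time. We form $Bh$ once for each $h\in\+H$, at cost $O(1)$ each since the dimension is $4$. Each of the $|\+G||\+H|$ output vectors then costs one coordinatewise product. So the output size is at most $M_{\+G}M_{\+H}$, and the total time is $O(M_{\+G}+M_{\+H}+M_{\+G}M_{\+H})=O(\mu)$.
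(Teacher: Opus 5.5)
Your proof is correct and matches the paper's argument: fixing one factor makes $G_B$ a Minkowski sum of linear images of the other, and the approximate inputs are swapped in one at a time. The paper phrases this with support functions, $\sigma_{A\star C}(z)=\sum_{g\in\+G}\sigma_C(D_gz)=\sum_{h\in\+H}\sigma_A(D_{Bh}z)$, and passes through $G_B(\+G,\underline{\+H})$ instead of your $G_B(\underline{\+G},\+H)$; these differences are cosmetic.
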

\begin{proof}
Write $B\+H=\{Bh:h\in\+H\}$.  Applying $B$ preserves zonotope containments and does not change the error weight.  The output of $G_B$ is the coordinatewise product construction applied to $\+G$ and $B\+H$.

For represented zonotopes $A=\+Z(\+G)$ and $C=\+Z(B\+H)$, write
$$
A\star C=\+Z(G_B(\+G,\+H)).
$$
For a vector $g=(g_1,\ldots,g_4)$, let $D_g=\operatorname{diag}(g_1,\ldots,g_4)$, so that $D_gz=g\odot z$ for every $z\in\^R^4$.  Then
\begin{align*}
\sigma_{A\star C}(z) &=\sum_{g\in\+G}\sum_{h\in\+H}\abs{\inner{z}{g\odot Bh}}=\sum_{g\in\+G}\sum_{h\in\+H}\abs{\inner{D_gz}{Bh}}\\
&=\sum_{g\in\+G}\sigma_{\+Z(B\+H)}(D_gz) = \sum_{g\in\+G}\sigma_C(D_gz).
\end{align*}
Since the Hadamard product is commutative, by switching $g$ and $Bh$, we can also obtain that $$\sigma_{A\star C}(z) = \sum_{h\in\+H}\sigma_A(D_{Bh}z).$$

Now suppose the two inputs are represented by compressed multisets $\underline{\+G}$ and $\underline{\+H}$ satisfying
$$
\+Z(\underline{\+G})\subseteq \+Z(\+G)\subseteq(1+\eta)^{s_{\+G}}\+Z(\underline{\+G}),
\qquad
\+Z(\underline{\+H})\subseteq \+Z(\+H)\subseteq(1+\eta)^{s_{\+H}}\+Z(\underline{\+H}).
$$
Set $\underline A=\+Z(\underline{\+G})$, $\underline C=\+Z(B\underline{\+H})$.
Since $B$ is linear, $\underline C\subseteq C\subseteq(1+\eta)^{s_{\+H}}\underline C$.  Therefore, for every $z\in\^R^4$,
\begin{align*}
\sigma_{A\star C}(z) &= \sum_{g\in\+G}\sigma_C(D_gz) \leq (1+\eta)^{s_{\+H}}\sum_{g\in\+G}\sigma_{\underline C}(D_gz)=(1+\eta)^{s_{\+H}}\sigma_{A\star\underline C}(z)\\
&=(1+\eta)^{s_{\+H}}\sum_{\underline h\in\underline{\+H}}\sigma_A(D_{B\underline h}z) \leq(1+\eta)^{s_{\+H}+s_{\+G}}\sum_{\underline h\in\underline{\+H}}\sigma_{\underline A}(D_{B\underline h}z)\\
&=
(1+\eta)^{s_{\+H}+s_{\+G}}\sigma_{\underline A\star\underline C}(z).
\end{align*}
The same monotonicity also gives $\underline A\star\underline C\subseteq A\star C$.  Hence
$$
\+Z(G_B(\underline{\+G},\underline{\+H}))
\subseteq
\+Z(G_B(\+G,\+H))
\subseteq
(1+\eta)^{s_{\+G}+s_{\+H}}
\+Z(G_B(\underline{\+G},\underline{\+H})).
$$
The error map is $\phi_B(s_{\+G},s_{\+H})=s_{\+G}+s_{\+H}$.
The size and time bounds follow by iterating over all pairs $g\in\+G$ and $h\in\+H$: each pair yields one output generator in $O(1)$ time for the fixed dimension $4$, so the total size is upper bounded by $\mu=M_{\+G}+M_{\+H}+M_{\+G}M_{\+H}$, and in particular $T_{G_B}(M_{\+G}+M_{\+H}+M_{\+G}M_{\+H})=O(M_{\+G}M_{\+H})$.
\end{proof}

\begin{proof}[Proof of \Cref{thm:latent-tree}]
We first compute $\gamma=(1/Z_{\^P},1/Z_{\^P},-1/Z_{\^Q},-1/Z_{\^Q})$, which requires the partition functions $Z_{\^P}$ and $Z_{\^Q}$.
Since the underlying graph is a tree, the Ising partition function is
computable exactly by a standard bottom-up dynamic program.  Root the
tree at $r$ and define
\[
F^R_u(s)
=
\exp(h^R_us)
\prod_{v\in \mathrm{ch}(u)}
\sum_{t\in\{\pm1\}}\exp(J^R_{uv}st)F^R_v(t),
\quad \forall s\in\{\pm1\},R\in\{\^P,\^Q\}.
\]
For leaves, $F^R_u(s)=\exp(h^R_us)$.  Then $Z_R=F^R_r(+1)+F^R_r(-1)$.

The family $\+L=\{\+G_u^{(0)}:u\in V\}$ has total size $O(|V|)$.  The full partition functions $Z_{\^P}$ and $Z_{\^Q}$, and hence the vector
$$
\gamma=\left(\frac1{Z_{\^P}},\frac1{Z_{\^P}},-\frac1{Z_{\^Q}},-\frac1{Z_{\^Q}}\right),
$$
are computed by the usual tree partition-function DP in $O(|V|)$ time. Thus the initialization time is $T_0=O(|V|)$.

For a parent-child pair $u,v$ in the rooted orientation of $T$, define the linear map $B_{u,v}:\^R^4\to\^R^4$ by the matrix
$$
B_{u,v}
=
\begin{pmatrix}
\exp({J^{\^P}_{uv}}) & \exp({-J^{\^P}_{uv}}) & 0 & 0\\
\exp({-J^{\^P}_{uv}}) & \exp({J^{\^P}_{uv}}) & 0 & 0\\
0 & 0 & \exp({J^{\^Q}_{uv}}) & \exp({-J^{\^Q}_{uv}})\\
0 & 0 & \exp({-J^{\^Q}_{uv}}) & \exp({J^{\^Q}_{uv}})
\end{pmatrix}.
$$
For each node $u$, let its children in the rooted tree be $v_1,\ldots,v_m$.  After recursively constructing the completed child multisets, apply the edge-lift gates
$$
\+G_u^{(j)}
=G_{B_{u,v_j}}(\+G_u^{(j-1)},\+G_{v_j}),\qquad j=1,\ldots,m,
$$
and set $\+G_u=\+G_u^{(m)}$.  If $u$ is a leaf, then $m=0$ and $\+G_u=\+G_u^{(0)}$.

We prove by induction from the leaves upward that $\+G_u=\{g_u(x_{L_u}):x_{L_u}\in\{\pm1\}^{L_u}\}$.  The base case is exactly the definition of \eqref{eq:G-leaf}: it stores the local external field contribution in the two spin coordinates of each model, and when $u$ is a leaf, the indicator factors enforce the observed symbol $x$.  For the induction step, after attaching $v_1,\ldots,v_j$ we have
$$
\+G_u^{(j)}
=
\left\{
g^{(0)}\odot
\bigodot_{p=1}^j B_{u,v_p}g_{v_p}(x_{L_{v_p}}):
g^{(0)}\in\+G_u^{(0)},\
x_{L_{v_p}}\in\{\pm1\}^{L_{v_p}}\text{ for }p=1,\ldots,j
\right\}.
$$
For $j=m$, fix $R\in\{\^P,\^Q\}$ and $s\in\{\pm1\}$, and consider the coordinate indexed by $(R,s)$.  By the induction hypothesis for every child $v_p$ and definition in \eqref{eq:latent-g},
$$
\left(B_{u,v_p}g_{v_p}(x_{L_{v_p}})\right)_{R,s}
=
\sum_{t\in\{\pm1\}}\exp({J^R_{uv_p}st})Z^{v_p}_R(t,x_{L_{v_p}}).
$$
Therefore the $(R,s)$-coordinate of a generator in $\+G_u$ is
\begin{align*}
  &\exp({h_u^R s})
  \prod_{p=1}^m
  \sum_{t\in\{\pm1\}}\exp({J^R_{uv_p}st})Z^{v_p}_R(t,x_{L_{v_p}}) \notag\\
  =&
  \exp({h_u^R s})
  \prod_{p=1}^m
  \sum_{t\in\{\pm1\}}
  \exp({J^R_{uv_p}st})
  \sum_{\substack{y^{(p)}\in\{\pm1\}^{V_{v_p}}:\\
  y^{(p)}_{v_p}=t,\ y^{(p)}_{L_{v_p}}=x_{L_{v_p}}}}
  \exp\left(
  \sum_{\{i,j\}\in E_{v_p}}J^R_{ij}y^{(p)}_iy^{(p)}_j
  +
  \sum_{i\in V_{v_p}}h_i^R y^{(p)}_i
  \right)\notag \\
  =&
  \exp({h_u^R s})
  \prod_{p=1}^m
  \sum_{\substack{y^{(p)}\in\{\pm1\}^{V_{v_p}}:\\
  y^{(p)}_{L_{v_p}}=x_{L_{v_p}}}}
  \exp\left(
  J^R_{uv_p}s\,y^{(p)}_{v_p}
  +
  \sum_{\{i,j\}\in E_{v_p}}J^R_{ij}y^{(p)}_iy^{(p)}_j
  +
  \sum_{i\in V_{v_p}}h_i^R y^{(p)}_i
  \right)\notag \\
  =&
  \sum_{\substack{y\in\{\pm1\}^{V_u}:\\
  y_u=s,\ y_{L_u}=x_{L_u}}}
  \exp\left(
  \sum_{\{i,j\}\in E_u}J^R_{ij}y_iy_j
  +
  \sum_{i\in V_u}h_i^R y_i
  \right)
  =Z^u_R(s,x_{L_u}).
\end{align*}
The first and last equality come from the definition in \eqref{eq:conditional-partition-function}, and the third equality uses $V_u = \{u\}\cup V_{v_1}\cup\cdots\cup V_{v_m}$ and $E_u = \{\{u,v_p\}:p=1,\ldots,m\}\cup E_{v_1}\cup\cdots\cup E_{v_m}$. Thus by induction we have $\+G_r=\{g_r(x_{L}):x_{L}\in\{\pm1\}^{L}\} = \+G^{\mathrm{root}}$.

The gate sequence is directed acyclic and has one edge-lift gate per edge, so $N=|E|=|V|-1$.
All gates are admissible by \Cref{lem:admissible-edge-lift}.  Since $\phi_B(s_{\+G},s_{\+H})=s_{\+G}+s_{\+H}$, and the abstract algorithm adds one compression after every gate, the same bottom-up induction shows that the completed multiset at node $u$ has error weight equal to the number of edges in the subtree rooted at $u$.  Therefore the final error weight is $S=|E|=|V|-1$.
Applying \Cref{thm:abstract-fptas} gives the stated multiplicative approximation.

It remains to substitute the parameters into the running-time bound.  Algorithm \ref{alg:abstract-FPTAS} uses
$\eta=(1+\varepsilon)^{1/S}-1$, 
so $\eta^{-1}=O(|V|/\varepsilon)$.  By \Cref{lem:compression}, with $K=4$, every stored compressed multiset has size
$$ M=O((|V|/\varepsilon)^4). $$
For each edge-lift gate $G_B$, $M_{G_B}\le M^2$, and $T_{G_B}(2M+M^2)=O(M^2)$.  Since there are $|V|-1$ gates and $S=|V|-1$, \Cref{thm:abstract-fptas} gives total running time
\begin{equation*}
O\left(|V|\left(M^2+M^2(|V|/\varepsilon)^4\right)\right)
=O\left(|V|^{13}\varepsilon^{-12}\right).\qedhere
\end{equation*}
\end{proof}

\section*{Acknowledgements}
The author is grateful to Weiming Feng for introducing the subject of mixtures of Markov chains during the development of the coupling-based approach in~\cite{FengFYZ26}, for carefully verifying the correctness of the results, and for valuable guidance in the preparation of this manuscript. The author also thanks the anonymous reviewers for their constructive comments and suggestions, which have been incorporated into the present revision.

For language editing of this manuscript, including grammar, style, and wording polishing, the author used GPT-5.5 Pro for the initial version and GPT-5.6 Sol and GPT-6 Astra for subsequent revisions.

\bibliographystyle{alpha}
\bibliography{refs}

@inproceedings{DaganKD22,
  author       = {Yuval Dagan and
                  Anthimos Vardis Kandiros and
                  Constantinos Daskalakis},
  title        = {{EM}'s Convergence in {Gaussian} Latent Tree Models},
  booktitle    = {COLT},
  series       = {Proceedings of Machine Learning Research},
  volume       = {178},
  pages        = {2597--2667},
  publisher    = {{PMLR}},
  year         = {2022},
}

@inproceedings{SpaehT23,
  author       = {Fabian Spaeh and
                  Charalampos E. Tsourakakis},
  title        = {Learning Mixtures of {Markov Chains} with Quality Guarantees},
  booktitle    = {{WWW}},
  pages        = {662--672},
  publisher    = {{ACM}},
  year         = {2023},
}

@inproceedings{GuptaKV16,
  author       = {Rishi Gupta and
                  Ravi Kumar and
                  Sergei Vassilvitskii},
  title        = {On Mixtures of {Markov Chains}},
  booktitle    = {NIPS},
  pages        = {3441--3449},
  year         = {2016},
}

@article{FengFYZ26,
  author       = {Weiming Feng and
                  Yucheng Fu and
                  Minji Yang and
                  Anqi Zhang},
  title        = {On Computing Total Variation Distance Between Mixtures of Product
                  Distributions},
  journal      = {CoRR},
  volume       = {abs/2605.03839},
  year         = {2026},
}

@article{BCFM26,
  author       = {Arnab Bhattacharyya and
                  Graham Cormode and
                  Yucheng Fu and
                  Kuldeep S. Meel},
  title        = {Total Variation Distance Estimation through Domain Reduction},
  journal      = {CoRR},
  volume       = {abs/2609.18707},
  year         = {2026},
  doi          = {10.48550/arXiv.2609.18707},
  url          = {https://arxiv.org/abs/2609.18707},
}

@article{BreslerK20,
  author       = {Bresler, Guy and Karzand, Mina},
  title        = {Learning a tree-structured {Ising} model in order to make predictions},
  journal      = {Ann. Statist.},
  volume       = {48},
  number       = {2},
  pages        = {713--737},
  year         = {2020},
  doi          = {10.1214/19-AOS1808},
}

@inproceedings{CohenP15,
  author       = {Michael B. Cohen and
                  Richard Peng},
  title        = {{$\ell_p$} Row Sampling by {Lewis} Weights},
  booktitle    = {{STOC}},
  pages        = {183--192},
  publisher    = {{ACM}},
  year         = {2015},
}

@article{BourgainLM89,
  author       = {Bourgain, J. and Lindenstrauss, J. and Milman, V. D.},
  title        = {Approximation of zonoids by zonotopes},
  journal      = {Acta Math.},
  volume       = {162},
  number       = {1-2},
  pages        = {73--141},
  year         = {1989},
  doi          = {10.1007/BF02392835},
}

@article{Matousek96Zonotopes,
  author       = {Matou{\v{s}}ek, Ji{\v{r}}{\'{\i}}},
  title        = {Improved upper bounds for approximation by zonotopes},
  journal      = {Acta Math.},
  volume       = {177},
  number       = {1},
  pages        = {55--73},
  year         = {1996},
  doi          = {10.1007/BF02392598},
}

@article{CampiHW94,
  author       = {Campi, S. and Haas, D. and Weil, W.},
  title        = {Approximation of zonoids by zonotopes in fixed directions},
  journal      = {Discrete Comput. Geom.},
  volume       = {11},
  number       = {4},
  pages        = {419--431},
  year         = {1994},
  doi          = {10.1007/BF02574016},
}

@article{YangO22,
  author       = {Liren Yang and
                  Necmiye Ozay},
  title        = {Scalable Zonotopic Under-Approximation of Backward Reachable
                  Sets for Uncertain Linear Systems},
  journal      = {{IEEE} Control. Syst. Lett.},
  volume       = {6},
  pages        = {1555--1560},
  year         = {2022},
}

@article{LutzowKA25,
  author       = {Laura L{\"{u}}tzow and
                  Niklas Kochdumper and
                  Matthias Althoff},
  title        = {Underapproximative Methods for the Order Reduction of Zonotopes},
  journal      = {{IEEE} Control. Syst. Lett.},
  volume       = {9},
  pages        = {1730--1735},
  year         = {2025},
}

@article{Lewis78,
  author       = {Lewis, D. R.},
  title        = {Finite dimensional subspaces of {$L_p$}},
  journal      = {Studia Math.},
  volume       = {63},
  number       = {2},
  pages        = {207--212},
  year         = {1978},
}

@article{ReisRothvoss26,
  author       = {Victor Reis and
                  Thomas Rothvoss},
  title        = {Linear-size {$\ell_1$} sparsifiers},
  journal      = {CoRR},
  volume       = {abs/2606.28147},
  year         = {2026},
}

@inproceedings{KandirosDDC23,
  author       = {Anthimos Vardis Kandiros and
                  Constantinos Daskalakis and
                  Yuval Dagan and
                  Davin Choo},
  title        = {Learning and Testing Latent-Tree {Ising} Models Efficiently},
  booktitle    = {COLT},
  series       = {Proceedings of Machine Learning Research},
  volume       = {195},
  pages        = {1666--1729},
  publisher    = {{PMLR}},
  year         = {2023},
}

@inproceedings{KausikTT23,
  author       = {Chinmaya Kausik and
                  Kevin Tan and
                  Ambuj Tewari},
  title        = {Learning Mixtures of {Markov Chains and MDPs}},
  booktitle    = {ICML},
  series       = {Proceedings of Machine Learning Research},
  volume       = {202},
  pages        = {15970--16017},
  publisher    = {{PMLR}},
  year         = {2023},
}

@inproceedings{AshtianiBHLPM18,
  author       = {Hassan Ashtiani and
                  Shai Ben{-}David and
                  Nicholas J. A. Harvey and
                  Christopher Liaw and
                  Abbas Mehrabian and
                  Yaniv Plan},
  title        = {Nearly tight sample complexity bounds for learning {mixtures of Gaussians}
                  via sample compression schemes},
  booktitle    = {NeurIPS},
  pages        = {3416--3425},
  year         = {2018},
}

@inproceedings{LiS17,
  author       = {Jerry Li and
                  Ludwig Schmidt},
  title        = {Robust and Proper Learning for Mixtures of {Gaussians} via Systems of
                  Polynomial Inequalities},
  booktitle    = {COLT},
  series       = {Proceedings of Machine Learning Research},
  volume       = {65},
  pages        = {1302--1382},
  publisher    = {{PMLR}},
  year         = {2017},
}

@inproceedings{JainO14,
  author       = {Prateek Jain and
                  Sewoong Oh},
  title        = {Learning Mixtures of Discrete Product Distributions using Spectral
                  Decompositions},
  booktitle    = {COLT},
  series       = {{JMLR} Workshop and Conference Proceedings},
  volume       = {35},
  pages        = {824--856},
  publisher    = {JMLR.org},
  year         = {2014},
}

@inproceedings{GordonJMRS24,
  author       = {Spencer L. Gordon and
                  Erik Jahn and
                  Bijan Mazaheri and
                  Yuval Rabani and
                  Leonard J. Schulman},
  title        = {Identification of mixtures of discrete product distributions in near-optimal
                  sample and time complexity},
  booktitle    = {COLT},
  series       = {Proceedings of Machine Learning Research},
  volume       = {247},
  pages        = {2071--2091},
  publisher    = {{PMLR}},
  year         = {2024},
}

@inproceedings{GordonMRS21,
  author       = {Spencer Gordon and
                  Bijan H. Mazaheri and
                  Yuval Rabani and
                  Leonard J. Schulman},
  title        = {Source Identification for Mixtures of Product Distributions},
  booktitle    = {COLT},
  series       = {Proceedings of Machine Learning Research},
  volume       = {134},
  pages        = {2193--2216},
  publisher    = {{PMLR}},
  year         = {2021},
}

@article{FeldmanOS08,
  author       = {Jon Feldman and
                  Ryan O'Donnell and
                  Rocco A. Servedio},
  title        = {Learning Mixtures of Product Distributions over Discrete Domains},
  journal      = {{SIAM} J. Comput.},
  volume       = {37},
  number       = {5},
  pages        = {1536--1564},
  year         = {2008},
}

@inproceedings{ChenM19,
  author       = {Sitan Chen and
                  Ankur Moitra},
  title        = {Beyond the low-degree algorithm: mixtures of subcubes and their applications},
  booktitle    = {STOC},
  pages        = {869--880},
  publisher    = {{ACM}},
  year         = {2019},
}

@inproceedings{0001GMMPV25,
  author       = {Arnab Bhattacharyya and
                  Sutanu Gayen and
                  Kuldeep S. Meel and
                  Dimitrios Myrisiotis and
                  Aduri Pavan and
                  N. V. Vinodchandran},
  title        = {Computational Explorations of Total Variation Distance},
  booktitle    = {ICLR},
  publisher    = {OpenReview.net},
  year         = {2025},
}

@inproceedings{KwonECM23,
  author       = {Jeongyeol Kwon and
                  Yonathan Efroni and
                  Constantine Caramanis and
                  Shie Mannor},
  title        = {{Reward-Mixing MDPs} with Few Latent Contexts are Learnable},
  booktitle    = {ICML},
  series       = {Proceedings of Machine Learning Research},
  volume       = {202},
  pages        = {18057--18082},
  publisher    = {{PMLR}},
  year         = {2023},
}

@inproceedings{BFS25,
  author       = {Arnab Bhattacharyya and
                  Weiming Feng and
                  Piyush Srivastava},
  title        = {Approximating the Total Variation Distance between {G}aussians},
  booktitle    = {{AISTATS}},
  series       = {Proceedings of Machine Learning Research},
  volume       = {258},
  pages        = {1846--1854},
  publisher    = {{PMLR}},
  year         = {2025},
}

@inproceedings{feng2025approximating,
   author       = {Weiming Feng and
                  Hongyang Liu and
                  Minji Yang},
  title        = {Approximating the total variation distance between spin systems},
  booktitle    = {COLT},
  series       = {Proceedings of Machine Learning Research},
  volume       = {291},
  pages        = {1974--2025},
  publisher    = {{PMLR}},
  year         = {2025},
}

@inproceedings{CanonneR14,
  author       = {Cl{\'{e}}ment L. Canonne and
                  Ronitt Rubinfeld},
  title        = {Testing Probability Distributions Underlying Aggregated Data},
  booktitle    = {{ICALP} {(1)}},
  series       = {Lecture Notes in Computer Science},
  volume       = {8572},
  pages        = {283--295},
  publisher    = {Springer},
  year         = {2014},
}

@inproceedings{Kiefer18,
  author       = {Stefan Kiefer},
  title        = {On Computing the Total Variation Distance of Hidden {M}arkov Models},
  booktitle    = {ICALP},
  series       = {LIPIcs},
  volume       = {107},
  pages        = {130:1--130:13},
  publisher    = {Schloss Dagstuhl - Leibniz-Zentrum f{\"{u}}r Informatik},
  year         = {2018},
}

@inproceedings{ChenK14,
  author       = {Taolue Chen and
                  Stefan Kiefer},
  title        = {On the Total Variation Distance of Labelled {Markov} Chains},
  booktitle    = {{CSL-LICS}},
  pages        = {33:1--33:10},
  publisher    = {{ACM}},
  year         = {2014},
}

@inproceedings{FengLL24,
  author       = {Weiming Feng and
                  Liqiang Liu and
                  Tianren Liu},
  title        = {On Deterministically Approximating Total Variation Distance},
  booktitle    = {SODA},
  pages        = {1766--1791},
  publisher    = {{SIAM}},
  year         = {2024},
}

@inproceedings{BGMMPV23,
  author       = {Arnab Bhattacharyya and
                  Sutanu Gayen and
                  Kuldeep S. Meel and
                  Dimitrios Myrisiotis and
                  A. Pavan and
                  N. V. Vinodchandran},
  title        = {On Approximating Total Variation Distance},
  booktitle    = {IJCAI},
  pages        = {3479--3487},
  publisher    = {ijcai.org},
  year         = {2023},
}

@article{sahai2003complete,
  author       = {Amit Sahai and
                  Salil P. Vadhan},
  title        = {A Complete Problem for Statistical Zero Knowledge},
  journal      = {J. {ACM}},
  volume       = {50},
  number       = {2},
  pages        = {196--249},
  year         = {2003},
}

@article{FGJW23,
  author       = {Weiming Feng and
                  Heng Guo and
                  Mark Jerrum and
                  Jiaheng Wang},
  title        = {A simple polynomial-time approximation algorithm for the total variation
                  distance between two product distributions},
  journal      = {TheoretiCS},
  volume       = {2},
  eid          = {7},
  year         = {2023},
}

@inproceedings{BGMMPV24ICML,
  author       = {Arnab Bhattacharyya and
                  Sutanu Gayen and
                  Kuldeep S. Meel and
                  Dimitrios Myrisiotis and
                  A. Pavan and
                  N. V. Vinodchandran},
  title        = {Total Variation Distance Meets Probabilistic Inference},
  booktitle    = {{ICML}},
  series       = {Proceedings of Machine Learning Research},
  volume       = {235},
  pages        = {3776--3794},
  publisher    = {{PMLR}},
  year         = {2024},
}

@inproceedings{BGMV20,
  author       = {Arnab Bhattacharyya and
                  Sutanu Gayen and
                  Kuldeep S. Meel and
                  N. V. Vinodchandran},
  title        = {Efficient Distance Approximation for Structured High-Dimensional Distributions
                  via Learning},
  booktitle    = {NeurIPS},
  year         = {2020},
}

\appendix
\section{Proof of \texorpdfstring{\Cref{lem:lewis-rounding}}{Lemma 3.4}}\label{sec:appendix}

To give an algorithm for \Cref{lem:lewis-rounding}, we introduce the $\ell_1$ Lewis weights, following \cite[Definition 2.2]{CohenP15}.
\begin{definition}[\texorpdfstring{$\ell_1$}{l1} Lewis weights]\label{def:l1-lewis}
Let $G\in\mathbb R^{m\times r}$ be a matrix of full column rank, with nonzero rows $g_1^T, \ldots, g_m^T$.  The $\ell_1$ Lewis weights of $G$ are the entries of the positive vector $w = (w_1,\ldots,w_m)$ satisfying
$$ w_i^2=g_i^T\left(\sum_{j=1}^m\frac{1}{w_j}g_jg_j^T\right)^{-1}g_i, \qquad \forall i\in [m]. $$
\end{definition}

\begin{lemma}\label{lem:lewis-ellipsoid-comparison}
Let $G\in\mathbb R^{m\times r}$ be a matrix of full column rank and let $w = (w_1,\ldots,w_m)$ be the vector of $\ell_1$ Lewis weights of $G$. Set $ M=\sum_{i=1}^m\frac1{w_i}g_ig_i^T$.
Then, for every $z\in\mathbb R^r$,
$$
\sqrt{z^TMz}\leq \|Gz\|_1\leq \sqrt r\,\sqrt{z^TMz}.
$$

Moreover, if $\alpha\ge1$ and $\widetilde w$ is an $\alpha$-approximation to the $\ell_1$ Lewis weights, that is 
$$ \frac{1}{\alpha}w_i\le\widetilde w_i\le\alpha w_i, \qquad \forall i\in [m], $$
and $\widetilde M=\sum_{i=1}^m\frac{1}{\widetilde w_i}g_ig_i^T$, then for every $z\in\mathbb R^r$
$$
\frac1{\sqrt\alpha}\sqrt{z^T\widetilde Mz}
\le \|Gz\|_1
\le \sqrt{\alpha r}\sqrt{z^T\widetilde Mz}.
$$
\end{lemma}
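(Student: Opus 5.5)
The plan is to prove the exact-weight inequalities first and then transfer them to approximate weights by a comparison of quadratic forms. Write $a_i=\inner{g_i}{z}$, so that $\|Gz\|_1=\sum_i|a_i|$ and $z^TMz=\sum_i a_i^2/w_i$. Two basic facts about Lewis weights will be used. First, from \Cref{def:l1-lewis},
\[
\sum_i w_i=\sum_i \frac{1}{w_i}\,w_i^2=\sum_i \frac{1}{w_i}\,g_i^TM^{-1}g_i=\operatorname{tr}\Bigl(M^{-1}\sum_i\frac{1}{w_i}g_ig_i^T\Bigr)=\operatorname{tr}(I_r)=r.
\]
Second, by Cauchy--Schwarz in the $M$-inner product, $|a_i|=|g_i^Tz|\le\sqrt{g_i^TM^{-1}g_i}\,\sqrt{z^TMz}=w_i\sqrt{z^TMz}$.

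For the upper bound, apply Cauchy--Schwarz with weights $w_i$:
\[
\sum_i|a_i|=\sum_i\sqrt{w_i}\cdot\frac{|a_i|}{\sqrt{w_i}}\le\Bigl(\sum_i w_i\Bigr)^{1/2}\Bigl(\sum_i\frac{a_i^2}{w_i}\Bigr)^{1/2}=\sqrt r\,\sqrt{z^TMz}.
\]
For the lower bound, use the pointwise bound $|a_i|\le w_i\sqrt{z^TMz}$, which gives $a_i^2/w_i\le|a_i|\sqrt{z^TMz}$. Summing over $i$ yields $z^TMz\le\|Gz\|_1\sqrt{z^TMz}$. Dividing by $\sqrt{z^TMz}$ (the case $z^TMz=0$ is trivial) gives $\sqrt{z^TMz}\le\|Gz\|_1$.

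For the approximate version, the entrywise bounds $w_i/\alpha\le\widetilde w_i\le\alpha w_i$ give $\frac1\alpha\cdot\frac1{w_i}\le\frac1{\widetilde w_i}\le\alpha\cdot\frac1{w_i}$. Summing against $g_ig_i^T\succeq0$ then yields
\[
\tfrac1\alpha M\preceq\widetilde M\preceq\alpha M, \qquad\text{so}\qquad \sqrt{z^T\widetilde Mz}\le\sqrt\alpha\sqrt{z^TMz}\quad\text{and}\quad\sqrt{z^TMz}\le\sqrt\alpha\sqrt{z^T\widetilde Mz}.
\]
Substituting these into the exact inequalities gives $\frac1{\sqrt\alpha}\sqrt{z^T\widetilde Mz}\le\sqrt{z^TMz}\le\|Gz\|_1\le\sqrt r\sqrt{z^TMz}\le\sqrt{\alpha r}\sqrt{z^T\widetilde Mz}$, which is exactly the claimed two-sided bound.

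The main subtlety is the lower bound, since it relies on the fixed-point identity $w_i^2=g_i^TM^{-1}g_i$ through the pointwise Cauchy--Schwarz step. Besides that, the only other point to check is that $M$ is invertible. This holds because $G$ has full column rank and every $w_i>0$, so $M$ is positive definite and $M^{-1}$ in the definition is well defined. The rest is routine.
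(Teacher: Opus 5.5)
Your proof is correct and is essentially the paper's argument. The paper introduces $B=W^{-1/2}GM^{-1/2}$ and $v=M^{1/2}z$, so that $\sqrt{w_i}\,|\inner{b_i}{v}|=|a_i|$ and $\|b_i\|_2^2=w_i$, and then runs the same four steps you do:
\begin{itemize}
\item the trace identity $\sum_i w_i=r$;
\item the weighted Cauchy--Schwarz upper bound;
\item the pointwise lower bound $|a_i|\le w_i\sqrt{z^TMz}$;
\item the comparison $\alpha^{-1}M\preceq\widetilde M\preceq\alpha M$.
\end{itemize}
You simply do the computation in the original coordinates, without the change of variables.
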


\begin{proof}
Let $W=\operatorname{diag}(w_1,\ldots,w_m)$ and $ B=W^{-1/2}GM^{-1/2}$. Then we have 
$$ B^TB=M^{-1/2}\cdot (G^TW^{-1}G)\cdot M^{-1/2}=M^{-1/2}MM^{-1/2} = I_r. $$
Let $b_i^T$ be the $i$-th row of $B$. Then by \Cref{def:l1-lewis},
$$
\|b_i\|_2^2=\frac1{w_i}g_i^TM^{-1}g_i=w_i.
$$
In particular, the sum-of-leverage-scores identity \cite[Fact 2.1]{CohenP15} gives
\begin{align}  \label{eq:l1-lewis-sum}
  \sum_{i=1}^m w_i=\sum_{i=1}^m\|b_i\|_2^2=\operatorname{tr}(B^TB)=r.
\end{align}

For $z\in\mathbb R^r$, set $v=M^{1/2}z$. Then
$$
\|Gz\|_1=\sum_{i=1}^m |\inner{g_i}{z}|
=\sum_{i=1}^m\sqrt{w_i}\,|\inner{b_i}{v}|,
\text{ and }
\sqrt{z^TMz}=\|v\|_2.
$$
Since $B^TB=I_r$,
$$
\sum_{i=1}^m|\inner{b_i}{v}|^2
=\|Bv\|_2^2
=v^TB^TBv
=\|v\|_2^2.
$$

The upper bound follows from~\eqref{eq:l1-lewis-sum} and the Cauchy--Schwarz inequality:
$$
\sum_{i=1}^m\sqrt{w_i}\,|\inner{b_i}{v}|
\le\left(\sum_{i=1}^m w_i\right)^{1/2}
\left(\sum_{i=1}^m|\inner{b_i}{v}|^2\right)^{1/2}
=\sqrt r\,\|v\|_2.
$$
For the lower bound, the claim is trivial when $v=0$. Assume $v\ne0$. Then
$$
|\inner{b_i}{v}|\le\|b_i\|_2\|v\|_2=\sqrt{w_i}\|v\|_2,
$$
and hence
$$
\sum_{i=1}^m\sqrt{w_i}\,|\inner{b_i}{v}|
\ge\frac1{\|v\|_2}\sum_{i=1}^m|\inner{b_i}{v}|^2
=\|v\|_2.
$$
This proves the exact comparison for $\ell_1$ Lewis weights.

For the approximate statement, the coordinatewise bounds on $\widetilde w$ imply
$$ \alpha^{-1}M\preceq\widetilde M\preceq\alpha M. $$
Combining these semidefinite inequalities with the exact comparison gives
\begin{equation*}\frac{1}{\sqrt\alpha}\sqrt{z^T\widetilde Mz} \leq \sqrt{z^TMz} \leq \|Gz\|_1 \leq \sqrt{r}\sqrt{z^TMz} \leq \sqrt{\alpha r}\sqrt{z^T\widetilde Mz}. \qedhere\end{equation*}
\end{proof}

We therefore adapt the $\ell_1$ specialization of the algorithm given in \cite[Section 3]{CohenP15}. All logarithms in this appendix are natural. Algorithm \ref{alg:cohen-peng-l1-lewis} returns an $m^\theta$-approximation to the $\ell_1$ Lewis weights of $G$.  In each iteration, forming $M^{(t)}$ costs $O(mr^2)$ arithmetic operations, inverting $M^{(t)}$ costs $O(r^3)$, and computing all quadratic forms $g_i^T(M^{(t)})^{-1}g_i$ costs $O(mr^2)$; hence the algorithm uses $O((mr^2+r^3)T)$ arithmetic operations.
\begin{algorithm}[ht]
    \caption{Cohen--Peng algorithm for approximate \texorpdfstring{$\ell_1$}{l1} Lewis weights}
    \label{alg:cohen-peng-l1-lewis}
    \SetKwInOut{Input}{Input}
    \SetKwInOut{Output}{Output}
    \underline{\textsc{ApproxLewisWeights}}$_1(G,\theta,T)$\;
    \Input{Matrix $G \in \^R^{m\times r}$ with rows $g_i^T$, a target exponent $\theta\in(0,1]$, and an iteration count $T\ge2\log(2/\theta)$}
    \Output{$m^\theta$-approximation $\widetilde w=(\widetilde w_i)_{i=1}^m$ to the $\ell_1$ Lewis weights of $G$;}
    Initialize $w_i^{(0)}\gets1$ for all $i\in[m]$\;
    \For{$t=0,\ldots,T-1$}{
        Let $W^{(t)}\gets\operatorname{diag}(w_1^{(t)},\ldots,w_m^{(t)})$\;
        Compute $M^{(t)}\gets \sum_{j=1}^m\frac{1}{w_j^{(t)}}g_jg_j^T$\;
        Update $w_i^{(t+1)}\gets\sqrt{g_i^T(M^{(t)})^{-1}g_i}$ for all $i \in [m]$\;  
    }
    \Return $\widetilde w=(w_i^{(T)})_{i=1}^m$\;
\end{algorithm}

\paragraph{Initialization and iteration guarantee}
We verify the guarantee from the all-ones initialization, including the first-update bound of \cite[Lemma 3.5]{CohenP15}. Let $G$ have full column rank and nonzero rows, let $w^*$ be its exact $\ell_1$ Lewis weights, and set
$$
M^*=\sum_{i=1}^m\frac{1}{w_i^*}g_ig_i^T,
\qquad H=G^TG=M^{(0)}.
$$
For the exact weights, the matrix $B$ in the proof of \Cref{lem:lewis-ellipsoid-comparison} has orthonormal columns and squared row norms $w_i^*$. Thus $BB^T$ is an orthogonal projection and $0<w_i^*\le1$. In particular, $H\preceq M^*$. On the other hand, \Cref{lem:lewis-ellipsoid-comparison} and the Cauchy--Schwarz inequality give, for every $z\in\mathbb R^r$,
$$
z^TM^*z\le\|Gz\|_1^2\le m\|Gz\|_2^2=mz^THz.
$$
Consequently,
$$
\frac1mM^*\preceq H\preceq M^*,
\qquad (M^*)^{-1}\preceq H^{-1}\preceq m(M^*)^{-1}.
$$
The first exact update from $w^{(0)}=\mathbf1$ therefore satisfies
\begin{equation}\label{eq:lewis-first-update}
w_i^*\le w_i^{(1)}=\sqrt{g_i^TH^{-1}g_i}\le\sqrt m\,w_i^*,
\qquad i\in[m].
\end{equation}
This bound applies after the first update; the all-ones vector itself need not be a multiplicative approximation with a factor depending only on $m$.

For subsequent updates, suppose $w^*\le w^{(t)}\le\alpha w^*$ coordinatewise for some $t\ge1$. Then $\alpha^{-1}M^*\preceq M^{(t)}\preceq M^*$. Inverting these inequalities shows that the next iterate satisfies
$$
w_i^*\le w_i^{(t+1)}=\sqrt{g_i^T(M^{(t)})^{-1}g_i}\le\sqrt\alpha\,w_i^*.
$$
This is the exact-update contraction in \cite[Corollary 3.3]{CohenP15}. Starting with \eqref{eq:lewis-first-update}, induction gives
\begin{equation}\label{eq:lewis-iteration-bound}
w_i^*\le w_i^{(t)}\le m^{2^{-t}}w_i^*,
\qquad t\ge1,\quad i\in[m].
\end{equation}
For $\theta\in(0,1]$ and the stated $T\ge2\log(2/\theta)$, we have $T\ge1$ and, since $\log2\ge1/2$,
$$
2^{-T}\le e^{-T/2}\le\theta/2.
$$
Thus the output has approximation factor at most $m^{\theta/2}\le m^\theta$, as required.

\begin{proof}[Proof of \cref{lem:lewis-rounding}]
Delete zero generators.  If $r=0$, then $\+Z=\{0\}$ and the lemma holds trivially.  Otherwise choose a linear isomorphism $U:E\to\mathbb R^r$, and write
$$
\widetilde g_i=Ug_i\in\mathbb R^r,\qquad
\widetilde{\+Z}=U\+Z=\sum_{i=1}^m[-\widetilde g_i,\widetilde g_i].
$$
Let $\widetilde G\in\mathbb R^{m\times r}$ have $i$-th row $\widetilde g_i^T$.  Since the original generators span $E$, the rows of $\widetilde G$ span $\mathbb R^r$, so $\widetilde G$ has full column rank.  If $m=1$, set $\widetilde w_1=1$, which is exact.  Otherwise, take $\theta=\log2/\log m$ and run Algorithm \ref{alg:cohen-peng-l1-lewis} with input $(\widetilde G,\theta,T)$, where
$$ T=\left\lceil2\log\left(\frac{2\log (m+1)}{\log2}\right)\right\rceil. $$
For $m\ge2$, this choice satisfies
$$ T\geq2\log\left(\frac{2\log m}{\log2}\right)=2\log(2/\theta),$$
so \eqref{eq:lewis-iteration-bound} and the bound $2^{-T}\le\theta/2$ give an approximation factor at most
$$
m^{2^{-T}}\le m^{\theta/2}=\sqrt2\le2.
$$
Hence the resulting weights give a $2$-approximation to the $\ell_1$ Lewis weights of $\widetilde G$. The case $m=2$ has $\theta=1$, which is included in the algorithm's allowed range; $m=1$ was handled exactly above. In both cases, let $\widetilde w_i$ denote the chosen weights and set
$$ \widetilde M=\sum_{i=1}^m\frac{1}{\widetilde w_i}\widetilde g_i\widetilde g_i^T, \qquad \widetilde A=\sqrt2\,\widetilde M^{-1/2}. $$
Since $\sigma_{\widetilde{\+Z}}(z) = \sum_{i=1}^m\abs{\inner{\widetilde g_i}{z}} = \|\widetilde Gz\|_1$, \Cref{lem:lewis-ellipsoid-comparison} with $\alpha=2$ gives that, for every $z\in\mathbb R^r$,
$$ \frac1{\sqrt2}\sqrt{z^T\widetilde Mz}\le\sigma_{\widetilde{\+Z}}(z)\le\sqrt{2r}\sqrt{z^T\widetilde Mz}. $$
For a unit vector $u\in\mathbb R^r$, $\sigma_{\widetilde A\widetilde{\+Z}}(u)=\sigma_{\widetilde{\+Z}}(\widetilde A^Tu)$, and $(\widetilde A^Tu)^T\widetilde M(\widetilde A^Tu)=2$.  Hence
$$ 1\le\sigma_{\widetilde A\widetilde{\+Z}}(u)\le2\sqrt r $$
for every unit $u$, which is equivalent to $B_2^r\subseteq \widetilde A\widetilde{\+Z}\subseteq2\sqrt r\,B_2^r$.  Finally define the map in the statement by
$$
A=\widetilde A U:E\to\mathbb R^r.
$$
Then $A\+Z=\widetilde A\widetilde{\+Z}$, so $B_2^r\subseteq A\+Z\subseteq2\sqrt r\,B_2^r$.

Finally, after the weights are computed, forming $\widetilde M$ and computing $\widetilde M^{-1/2}$ cost another $O(mr^2+r^3)$ arithmetic operations.  Since $T=O(\log\log(m+2))$, the total time, including the construction of $U$ and conversion to $r$-dimensional coordinates, is $O_K(m\log\log(m+2))$.
\end{proof}
\end{document}